\newcommand\A{\mathbb A}
\newcommand\N{\mathbb N}
\newcommand\Q{\mathbb Q}
\newcommand\Z{\mathbb Z}
\newcommand\CO{{\mathcal O}}
\newcommand\CU{{\mathcal U}}
\newcommand\scrM{{\mathscr M}}
\newtheorem{theorem}{Theorem}
\newtheorem{lemma}[theorem]{Lemma}
\newtheorem{proposition}[theorem]{Proposition}
\newtheorem{remark}[theorem]{Remark}
\newtheorem{corollary}[theorem]{Corollary}
\newtheorem{definition}[theorem]{Definition}
\newtheorem{example}[theorem]{Example}
\newenvironment{proof}[1][Proof]{\noindent\textbf{#1.} }{\
\rule{0.5em}{0.5em}}
\begin{document}

\title{On B\'ezout Inequalities for non-homogeneous Polynomial Ideals\thanks{Partially supported by the following Iranian, Argentinean and Spanish grants: IPM No.95550420 (A.H), UBACyT 20020130100433BA and PICT-2014-3260 (J.H), MTM2014-55262-P (L.M.P).}}

\author{Amir Hashemi$^{({a})}$, Joos Heintz$^{({b})}$,
  Luis Miguel Pardo$^{({c})}$, Pablo Solern\'o$^{({d})}$ \\[0.3cm]
{\text {\small $(a)$ Department of Mathematical Sciences, Isfahan University of Technology,}}\\{\small Isfahan, 84156-83111, Iran}\\{\small
School of Mathematics, Institute for Research in Fundamental Sciences (IPM),}\\{\small Tehran, 19395-5746, Iran} \\[2mm]
{\small $(b)$ Departamento de Computaci\'on and ICC, UBA-CONICET,}\\{\small Facultad de Ciencias Exactas y Naturales, Universidad de Buenos Aires, }\\{\small Ciudad Universitaria, 1428, Buenos Aires, Argentina}\\[2mm]
{\small $(c)$ Depto. de Matem\'aticas, Estad\'istica y Computaci\'on
Facultad de Ciencias,}\\ {\small Universidad de Cantabria,
Avda. Los Castros s/n
E-39071 Santander, Spain}\\[2mm]
{\small $(d)$ Departamento de Matem\'atica and IMAS, UBA-CONICET,}\\ {\small Facultad de Ciencias Exactas y Naturales, Universidad de Buenos Aires,}\\{\small Ciudad Universitaria, 1428, Buenos Aires, Argentina}\\[3mm]
{\small E-mail addresses:}\\ {\small Amir.Hashemi@cc.iut.ac.ir, joos@dc.uba.ar, luis.pardo@unican.es, psolerno@dm.uba.ar}
}

\maketitle

%
%
%
%
%

\begin{abstract}
We introduce a ``workable" notion of degree for non-homogeneous polynomial ideals and formulate and prove ideal theoretic B\'ezout Inequalities for the sum of two ideals in terms of this notion of degree and the degree of generators. We compute probabilistically the degree of an equidimensional ideal.

\textbf{Keywords.} Non-homogeneous polynomial ideal, ideal degree,  idealistic B\'ezout Inequality.

MSC 13F20, 14A10, 13P10
\end{abstract}

\section*{Introduction}

Motivated by the aim to formulate and prove an idealistic version of B\'ezout's Theorem (see \cite{vogel}) and by applications to transcendence theory (see \cite{MW,Bro}), the notion of degree of \emph{homogeneous} polynomial ideals became intensively studied. In general this work relied on the notion of degree of homogeneous polynomial ideals based on the Hilbert polynomial (see \cite{Lazard} for a different view).

In this paper we propose an alternative and self-contained approach for non-homogeneous polynomial ideals leading to specific results which are not simple consequences of their homogeneous counterparts.

We introduce a ``workable" notion of degree for non-homogeneous polynomial ideals and formulate and prove ideal theoretic B\'ezout Inequalities for the sum of two ideals in terms of this notion of degree. However it turns out, that, due to the presence of embedded primes, a B\'ezout Inequality in completely intrinsic terms (depending only on the degrees of the two given ideals) is unfeasable. Hence in some place the degrees of generators of at least one of the ideals comes into play and our main B\'ezout Inequality  will be of this mixed type.

We finish the paper with a probabilistic algorithm which computes the degree of an equidimensional ideal given by generators.\\

\textbf{Organization of the paper}

The first three sections are devoted to the development of the tools we are going to use in the sequel. The technical highlight is Proposition \ref{prop_bezout1} in the third section which anticipates in some sense the ``correctness" of our ideal theoretic notion of degree (see Theorem \ref{degree_formula} and Definition \ref{degree-primary}) and the main result of the paper, namely the mixed type B\'ezout Inequality, Theorem \ref{bezout_suc_regular}, in the fourth section.

These results become combined in the fifth section with techniques going back to Masser and W\"ustholz \cite{MW} in order to estimate the degree of a polynomial ideal in terms of the degrees of generators (Theorem  \ref{weak}).

Finally the sixth section contains a probabilistic complexity result concerning the computation of the degree of an equidimensional ideal.

\section{Notions and Notations}
Let $K$ be an algebraically closed field, $\vec{X}=(X_1,\ldots,X_n)$ with $X_1,\ldots,X_n$ indeterminates over $K$ and $K[\vec{X}]$ the ring of $n-$variate polynomials with coefficients in $K$. The affine space $K^n$ with the Zariski topology is denoted by $\A^n$. For any ideal ${\frak a}$ of $K[\vec{X}]$, we denote by $V({\frak a})$ the set of its common zeros in $\A^n$ and by $A:=K[\vec{X}]/\frak{a}$ the associated factor ring.

We shall use freely standard notions and notations from Commutative Algebra and affine Algebraic Geometry. These can be found for example in \cite{Atiyah-Macdonald,Matsumura,Shafarevich}.

Less standard is the notion of degree of closed (affine) subvarieties of $\A^n$ we are going to use.

For an \emph{irreducible} closed subvariety $V$ of $\A^n$ we define the degree $\deg V$ of $V$ as the maximum number of points that can arise when we intersect $V$ with an affine linear subspace $E$ of $\A^n$ such that $V\cap E$ is finite (observe that it is a nontrivial fact that $\deg V<\infty$ holds). The degree $\deg V$ of an \emph{arbitrary} closed subvariety $V$ of $\A^n$ is the sum of the degrees of the irreducible components of $V$.

It is a remarkable fact that for this notion of degree that for closed subvarieties $V$ and $W$ an \emph{intrinsic} B\'ezout Inequality holds: $\deg (V\cap W)\le \deg(V)\,.\deg(W)$ (with intrinsic we mean that the degree of $V\cap W$ is estimated in terms of the degrees of $V$ and $W$ only). For more details we refer to \cite{Heintz83,fulton,vogel}.\\

\section{Secant and regular sequences}\label{secant-equences:sec}

Let ${\frak a}\subset K[\vec{X}]$ be an arbitrary ideal of dimension $m$ (i.e. the Krull dimension of the ring $A$ equals $m$) such that all its isolated primes have dimension $m$ (or, equivalently, the variety $V({\frak a})$ is equidimensional of dimension $m$). Under this condition we say that the ideal $\frak{a}$ is \emph{equidimensional} of dimension $m$.

\begin{definition} Let $\vec{f}:=f_1,\ldots,f_{m'}$, $m'\le m$, be a sequence of polynomials of $K[\vec{X}]$. Then $\vec{f}$ is called a \emph{secant sequence for ${\frak a}$} of length $m'$ if for any index $j$, $1\le j\le m'$, the dimension of the ideal ${\frak a}+(f_1,\ldots,f_j)$ is $m-j$. Moreover, $\vec{f}$ is called a \emph{regular sequence of length $m'$ with respect to $\frak{a}$} if $\frak{a}+(f_1,\ldots,f_{m'})$ is a proper ideal and if for any index $j$, $1\le j\le m'$ the polynomial $f_j$ is not a zero-divisor modulo the ideal $\frak{a}+(f_1,\ldots,f_{j-1})$. If this is the case the residue classes $f_1+{\frak a},\ldots,f_{m'}+{\frak a}$ are said to form a regular sequence in $A$.
\end{definition}

\begin{remark}
In the sequel we shall only consider secant sequences of maximal length $m$.
\end{remark}

Observe that for a secant sequence $\vec{f}:=f_1,\ldots,f_m$ for ${\frak a}$, all the isolated components of the ideal ${\frak a}+(f_1)$ have dimension $m-1$ and $f_2,\ldots,f_m$ is a secant sequence for ${\frak a}+(f_1)$. More generally, for $1\le j< m$ the polynomials  $f_{j+1},\ldots,f_m$ form a secant sequence for ${\frak a}+(f_1,\ldots, f_j)$.\\

Any regular sequence of maximal length $m$ with respect to $\frak{a}$ constitutes a secant sequence for $\frak{a}$, because any member of the regular sequence drops the Krull dimension by one at each step, up to reach dimension $0$.

\begin{proposition}\label{morfismo-secante:prop} Let
$\vec{f}:=f_1,\ldots,f_m$ be a secant sequence for the equidimensional ideal ${\frak a}$ of dimension $m$. Let us consider the regular morphism:
\begin{equation}\label{morfismo-secante:eqn}\begin{array}{lccc}
\vec{f}: & V({\frak a}) & \longrightarrow & \A^m,\qquad\qquad\qquad\qquad\qquad\\
& x & \longmapsto & \vec{f}(x):=(f_1(x), \ldots, f_m(x)).
\end{array}
\end{equation}
Then $\vec{f}(V({\frak a}))$ is Zariski dense in $\A^m$. Said otherwise, $\vec{f}:V(\frak{a})\to \A^m$ is dominant.
\end{proposition}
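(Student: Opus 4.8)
The plan is to reduce the assertion to the classical theorem on the dimension of the fibres of a dominant morphism of irreducible affine varieties. Let $V({\frak a})=V_1\cup\cdots\cup V_r$ be the decomposition of $V({\frak a})$ into irreducible components; by the equidimensionality hypothesis we have $\dim V_i=m$ for every $i$. Since $\overline{\vec f(V({\frak a}))}=\bigcup_{i=1}^r\overline{\vec f(V_i)}$ and $\A^m$ is irreducible, the map $\vec f$ is dominant if and only if $\overline{\vec f(V_i)}=\A^m$ for at least one $i$; thus it suffices to produce such an index.

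First I would argue by contradiction: assume $\vec f$ is not dominant. Then $W_i:=\overline{\vec f(V_i)}$ is a proper closed subvariety of $\A^m$ for every $i$, and hence $\dim W_i\le m-1$. Each restriction $\vec f|_{V_i}\colon V_i\to W_i$ is, by construction, a dominant morphism of irreducible varieties with $\dim V_i=m>\dim W_i$. By the theorem on fibre dimension, every non-empty fibre of $\vec f|_{V_i}$ has dimension at least $\dim V_i-\dim W_i\ge 1$.

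Next I would invoke the secant hypothesis — in fact only its instance $j=m$, namely that ${\frak b}:={\frak a}+(f_1,\ldots,f_m)$ has dimension $0$. This forces $K[\vec X]/{\frak b}$ to be a non-zero Artinian ring, so $V({\frak b})\ne\emptyset$ while $\dim V({\frak b})=0$. Choose a point $x\in V({\frak b})$. Then $x$ lies in some component $V_i$ and satisfies $f_1(x)=\cdots=f_m(x)=0$, so $x\in(\vec f|_{V_i})^{-1}(0)$; in particular this fibre is non-empty and therefore has dimension $\ge 1$. On the other hand $(\vec f|_{V_i})^{-1}(0)=V_i\cap V(f_1,\ldots,f_m)\subseteq V({\frak b})$, which has dimension $0$. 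This contradiction shows that $\overline{\vec f(V_i)}=\A^m$ for some $i$, and consequently $\vec f(V({\frak a}))\supseteq\vec f(V_i)$ is Zariski dense in $\A^m$.

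The one non-formal ingredient is the lower bound $\dim(\vec f|_{V_i})^{-1}(0)\ge\dim V_i-\dim W_i$ for the fibres of a dominant morphism of irreducible affine varieties, which is standard (see for instance \cite{Shafarevich}); everything else is bookkeeping with irreducible components together with the convention that the empty variety has negative dimension. The point that I expect to require a little care is ensuring that the fibre over $0$ meets the relevant component $V_i$ — i.e. is non-empty — and this is precisely guaranteed by the non-vacuity of $V({\frak a}+(f_1,\ldots,f_m))$ coming from its $0$-dimensionality.
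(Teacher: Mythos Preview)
Your proof is correct and follows essentially the same route as the paper's: both arguments pick an irreducible component meeting the zero-dimensional fibre $\vec f^{-1}(0)$ (whose existence is exactly the secant condition at $j=m$) and then invoke the Theorem on the Dimension of Fibres to conclude that the restriction of $\vec f$ to that component is dominant. The only cosmetic difference is that you phrase the last step as an explicit contradiction, whereas the paper states it directly; the logical content is the same.
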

\begin{proof}
 The proof is an easy consequence of the Theorem of the Dimension of the Fibers (see for instance \cite[\S 6.3 Theorem 7]{Shafarevich}). Since the sequence $\vec{f}$ is secant, the fiber $\vec{f}^{-1}(0)$ is a zero-dimensional algebraic set and, therefore, there exists an irreducible component of $V({\frak a})$ which intersects this fiber. If $\mathfrak{p}$ is the corresponding minimal associated prime of ${\frak a}$, the Theorem of the Dimension of the Fibers applied to the restriction $\vec{f}\mid_{V({\frak p})}:V({\frak p})\longrightarrow \A^m$ implies that $\vec{f}\mid_{V({\frak p})}$ is dominant. Thus $\vec{f}(V({\frak p}))$ is Zariski dense in $\A^m$ and hence also $\vec{f}(V(\frak{a}))$.
\end{proof}\\

From Proposition \ref{morfismo-secante:prop} we deduce immediately the following statement:

\begin{corollary} \label{algebraically-independent:corol}
With the same notations as in Proposition \ref{morfismo-secante:prop}, the following is a monomorphism of $K-$algebras:
$$\begin{array}{lcccr}
\vec{f}^*:& K[Y_1,\ldots, Y_m]& \longrightarrow & A=K[\vec{X}]/{\frak a}, & \\
& Y_i & \longmapsto & f_i + {\frak a}, & 1\leq i \leq m.
\end{array}$$
In particular, the residual classes $\{f_1 +{\frak a}, \ldots, f_m+ {\frak a}\}$ are algebraically independent over $K$.
\end{corollary}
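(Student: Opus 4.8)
The plan is to deduce the corollary directly from the dominance statement in Proposition \ref{morfismo-secante:prop} via the standard dictionary between dominant morphisms of affine varieties and injective comorphisms on coordinate rings. First I would observe that the map $\vec f^*$ in the statement is precisely the comorphism associated to the morphism $\vec f\colon V(\frak a)\to\A^m$ of \eqref{morfismo-secante:eqn}: under the identification of $K[Y_1,\ldots,Y_m]$ with the coordinate ring of $\A^m$ and of $A=K[\vec X]/\frak a$ with the coordinate ring of the (possibly non-reduced, but here we only need the reduced) affine scheme $V(\frak a)$, pulling back the coordinate function $Y_i$ along $\vec f$ gives exactly the residue class $f_i+\frak a$, and this extends uniquely to a $K$-algebra homomorphism. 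So the content to be checked is only the injectivity of $\vec f^*$.

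For the injectivity, the key point is that a regular map $\varphi\colon X\to Y$ of affine $K$-varieties with $Y$ irreducible is dominant if and only if the comorphism $\varphi^*\colon K[Y]\to K[X]$ is injective: if $g\in K[Y]$ lies in the kernel then $g$ vanishes on $\varphi(X)$, hence on $\overline{\varphi(X)}=Y$, so $g=0$ since $K[Y]$ is a domain; conversely an element of $\overline{\varphi(X)}$'s ideal pulls back to $0$. Here $Y=\A^m$ is irreducible and $K[Y]=K[Y_1,\ldots,Y_m]$ is a domain, and Proposition \ref{morfismo-secante:prop} gives that $\vec f(V(\frak a))$ is Zariski dense in $\A^m$. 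One subtlety: $A$ itself need not be reduced, so $\vec f^*$ does not a priori factor through the coordinate ring $K[V(\frak a)]_{\mathrm{red}}=K[\vec X]/\sqrt{\frak a}$; however, a polynomial $g\in K[Y_1,\ldots,Y_m]$ with $g(f_1,\ldots,f_m)\in\frak a$ in particular satisfies $g(f_1,\ldots,f_m)\in\sqrt{\frak a}$, i.e. $g\circ\vec f$ vanishes identically on $V(\frak a)$, so $g$ vanishes on the dense set $\vec f(V(\frak a))$ and therefore on all of $\A^m$, forcing $g=0$. Thus $\ker\vec f^*=0$.

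The ``in particular'' clause is then immediate: a nonzero polynomial relation $g(f_1+\frak a,\ldots,f_m+\frak a)=0$ in $A$ is exactly a nonzero element $g$ of $\ker\vec f^*$, which we have just ruled out; hence $f_1+\frak a,\ldots,f_m+\frak a$ are algebraically independent over $K$. I do not expect any real obstacle here — the corollary is a formal consequence of the proposition together with the elementary density-vs-injectivity lemma, and the only thing worth spelling out is the passage through $\sqrt{\frak a}$ to handle the possible non-reducedness of $\frak a$, which causes no trouble because algebraic independence is tested by the image in $A$ and any relation there persists modulo the radical.
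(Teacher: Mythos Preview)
Your argument is correct and is precisely the ``immediate'' deduction the paper has in mind: the corollary is stated without proof, as a direct consequence of the dominance in Proposition~\ref{morfismo-secante:prop} via the standard equivalence between dominance and injectivity of the comorphism. Your care in passing through $\sqrt{\frak a}$ to accommodate a possibly non-reduced $A$ is the right way to make this explicit, though the paper simply leaves it understood.
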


\begin{corollary} \label{density:corol}
Let $\vec{f}$ be a secant sequence for the equidimensional ideal $\frak{a}$ of dimension $m$. Then there exists a non-empty Zariski open subset $U$ of $\A^m$, such that for all $\vec{a}:=(\alpha_1,\ldots,\alpha_m)\in U\cap K^m$ the sequence $\vec{f}-\vec{a}:= f_1-\alpha_1,\ldots,f_m-\alpha_m$ is a secant sequence for ${\frak a}$.
\end{corollary}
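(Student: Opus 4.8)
The plan is to realize the secant condition as the non-vanishing of finitely many resultant-type polynomials in the translation parameters $\vec a$, and then to produce a single point (the origin) where they all are non-zero, so that the open set they cut out is non-empty. Concretely, fix $j$ with $1\le j\le m$ and consider the ideal $\frak{a}+(f_1-\alpha_1,\ldots,f_j-\alpha_j)$. For the translated sequence to be secant we need, for every $j$, that this ideal has dimension exactly $m-j$. The dimension can never drop by more than one at each step (adding one equation), so the only thing that can fail is that the dimension stays put, i.e.\ $f_j-\alpha_j$ lies in (or is a zero divisor with support in) some minimal prime of $\frak{a}+(f_1-\alpha_1,\ldots,f_{j-1}-\alpha_{j-1})$ of dimension $m-j+1$.

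First I would set up the geometry. By Proposition \ref{morfismo-secante:prop} applied to $\frak{a}$ and the secant sequence $\vec f$, the morphism $\vec f:V(\frak a)\to\A^m$ is dominant; more is true along the construction: for each $j$ the truncated map $(f_1,\ldots,f_{m-j})$ restricted to any $(m-j)$-dimensional irreducible component of $V(\frak a+(f_1,\ldots,f_j))$ is dominant onto $\A^{m-j}$ (this follows by the same fiber-dimension argument, using that $f_{j+1},\ldots,f_m$ is secant for $\frak a+(f_1,\ldots,f_j)$, as recorded in the remark preceding Proposition \ref{morfismo-secante:prop}). The key point is then: if $W$ is an irreducible variety of dimension $d$ and $g_1,\ldots,g_d:W\to\A^d$ is a dominant morphism, then for $(\beta_1,\ldots,\beta_d)$ in a non-empty open subset of $\A^d$ the fiber $(g_1-\beta_1,\ldots,g_d-\beta_d)^{-1}(0)$ is finite and, step by step, each partial system cuts down the dimension by exactly one. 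Indeed, a dominant morphism to $\A^d$ has generic fiber of dimension $0$, and the locus where the fiber jumps is a proper closed subset; intersecting finitely many such conditions over all the (finitely many) components that occur produces the desired open set $U$.

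The cleanest way to organize the induction is on $m$. For the base step $m=1$: $\vec f=f_1$ secant means $\frak a+(f_1)$ is zero-dimensional, in particular $f_1+\frak a$ is a non-constant element of $A$ that is algebraically independent over $K$ by Corollary \ref{algebraically-independent:corol}, hence transcendental; then $f_1-\alpha_1+\frak a$ is again transcendental for \emph{every} $\alpha_1\in K$, and $\frak a+(f_1-\alpha_1)$ is zero-dimensional for all but finitely many $\alpha_1$ (the finitely many values $\alpha_1=f_1(x)$ with $x$ ranging over the finitely many zero-dimensional—i.e.\ none—plus the values making the fiber positive-dimensional, which form a proper closed subset of $\A^1$). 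For the inductive step, first choose $\alpha_1$ in a non-empty open $U_1\subset\A^1$ so that $\frak a+(f_1-\alpha_1)$ is equidimensional of dimension $m-1$ (possible by the $m=1$ reasoning applied componentwise; note $f_2,\ldots,f_m$ remains secant for this new ideal because the dominance statements above are stable under generic translation of $f_1$), then apply the inductive hypothesis to the equidimensional ideal $\frak a+(f_1-\alpha_1)$ and its secant sequence $f_2,\ldots,f_m$ to get a non-empty open $U'\subset\A^{m-1}$ for the remaining translations; finally take $U$ to be the preimage of $U_1\times U'$ under a suitable projection, shrinking if necessary to an honest non-empty open subset of $\A^m$.

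The main obstacle I anticipate is the bookkeeping needed to guarantee that translating $f_1$ by a \emph{generic} scalar keeps $f_2,\ldots,f_m$ secant for the new ideal $\frak a+(f_1-\alpha_1)$ — equivalently, that the "generic dominance" enjoyed by the truncated maps is an open condition in $\alpha_1$ as well. This is where one really uses that the relevant bad loci (where some partial system fails to be generically finite, or where a new embedded or isolated component of too-large dimension appears) are constructible and proper, so that their union over the finitely many components in play is still proper; a careful invocation of upper semicontinuity of fiber dimension, or alternatively an explicit elimination-theoretic description via a Noether normalization and resultants, makes this precise. Once that stability is in hand, the induction closes and $U$ is obtained as a finite intersection of non-empty Zariski open subsets of $\A^m$, hence non-empty.
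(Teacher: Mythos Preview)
Your inductive strategy has a structural gap. After fixing $\alpha_1\in U_1$ and applying the inductive hypothesis to $\frak a+(f_1-\alpha_1)$, the open set $U'\subset\A^{m-1}$ you obtain depends on $\alpha_1$; write it $U'(\alpha_1)$. The set you assemble is then
\[
\{(\alpha_1,\alpha'):\alpha_1\in U_1,\ \alpha'\in U'(\alpha_1)\},
\]
which is a priori neither open nor of the form $U_1\times U'$ for a single $U'$. Your phrase ``shrinking if necessary to an honest non-empty open subset'' does not explain how to extract an open set from this fibered object; to do so one would need to know in advance that the secant locus in $\A^m$ is constructible and that the fibered set above is Zariski dense---exactly the kind of statement handled later in the paper by Lemma~\ref{density:lemmaux}, which is not yet available here. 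In addition, the step you yourself flag as the main obstacle---that $f_2,\ldots,f_m$ remains secant for $\frak a+(f_1-\alpha_1)$ for an open set of $\alpha_1$---is asserted but never established; the appeal to ``upper semicontinuity'' or ``resultants'' remains a promissory note.

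The paper's proof avoids induction entirely and thereby sidesteps both difficulties. For each $j$ it considers the truncated map $\vec f_j=(f_1,\ldots,f_j):V(\frak a)\to\A^j$ and uses the tautology that
\[
V\bigl(\frak a+(f_1-\alpha_1,\ldots,f_j-\alpha_j)\bigr)=\vec f_j^{\,-1}(\alpha_1,\ldots,\alpha_j).
\]
By the Theorem on the Dimension of Fibers applied componentwise, at least one irreducible component of $V(\frak a)$ maps dominantly under $\vec f_j$ with generic fiber of dimension $m-j$, while any component on which $\vec f_j$ is not dominant has image contained in a proper closed subset of $\A^j$. Hence there is a nonempty Zariski open $U_j\subset\A^j$ over which the total fiber has dimension exactly $m-j$. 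Then $U:=\bigcap_{j=1}^m(U_j\times\A^{m-j})$ is a nonempty open subset of $\A^m$ with the required property. No dependence of one open set on earlier choices arises, and no ``stability under translation of a single coordinate'' needs to be checked.
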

\begin{proof}
Fix an index $j,\ 1\le j\le m$ and let $\vec{f}_j:V({\frak a})\to \A^j$ be the polynomial map $(f_1(x),\ldots,f_j(x))$. As in the proof of Proposition \ref{morfismo-secante:prop}, we deduce that there exists at least one  irreducible component of $V({\frak a})$ such that the restriction of $\vec{f}_j$ is dominant and its typical fiber has dimension $m-j$. On the other hand, if $\frak{p}$ is an isolated associated prime of $\frak{a}$ such that the typical fiber of the restriction of $\vec{f}_j$ to $V({\frak p})$ has not dimension $m-j$, then the Zariski  closure $\overline{\vec{f}_j(V({\frak p}))}$ is properly contained in $\A^j$. Therefore, the set of points $\vec{y}\in \A^j$ such that the fiber $\vec{f}_j^{-1}(\vec{y})$ is $m-j$ dimensional contains a nonempty Zariski open set $U_j\subset \A^j$.
One verifies immediately that $U:=\displaystyle{\bigcap_{j=1}^m (U_j\times \A^{m-j})\subset \A^m}$ satisfies the statement of the corollary.
\end{proof}

\begin{remark} \label{remark_joos}
There exists in $(\A^n)^m$ a nonempty Zariski open set $\mathcal{O}$ of linear forms of $K[\vec{X}]$ such that each element $\vec{f}=f_1,\ldots,f_m$ of $\mathcal{O}$ constitutes a secant family for $\frak{a}$.
\end{remark}

\begin{proof}
The statement is a consequence of Noether's Normalization Lemma as in \cite[Lemma 1]{Heintz83} applied to the equidimensional variety $V(\frak{a})$.
\end{proof}\\

We need in the sequel the following technical lemma concerning Zariski dense subsets:

\begin{lemma} \label{density:lemmaux}
Let be given positive integers $n_1,\ldots,n_s$ and a subset $U$ of $ \A^{n_1}\times \cdots \times \A^{n_s}$. For each $i=1,\ldots,s$ denote by $\pi_i$ the canonical projection of the product space onto $\A^{n_i}$.
Assume that the set $U$ satisfies the following conditions:
\begin{enumerate}
\item[i)] $\pi_1(U)$ is Zariski dense in $\A^{n_1}$.
\item[ii)] For each $i=1,\ldots,s-1$ and each $a:=(a_1,\ldots,a_i)\in \big(\pi_1\times \ldots\times \pi_i\big)(U)$ the set
\[
\pi_{i+1}\Big((\{a\}\times  \A^{n_{i+1}}\times \cdots \times \A^{n_s})\cap U\Big)\subset \A^{n_{i+1}}
\]
is Zariski dense in $\A^{n_{i+1}}$.
\end{enumerate}
Then, the set $U$ is Zariski dense in $\A^{n_1}\times \cdots \times \A^{n_s}$.
\end{lemma}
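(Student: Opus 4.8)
The plan is to induct on the number $s$ of factors. The case $s=1$ is exactly hypothesis (i), so I would assume $s\ge 2$ and that the statement is already known for $s-1$ factors. Identify each $\A^{n_i}$ with affine $n_i$-space whose coordinate functions form a block of variables $\vec{X}^{(i)}$, so that the Zariski-closed subsets of $\A^{n_1}\times\cdots\times\A^{n_s}$ are precisely the zero sets of ideals of $K[\vec{X}^{(1)},\ldots,\vec{X}^{(s)}]$; it then suffices to show that any polynomial $P\in K[\vec{X}^{(1)},\ldots,\vec{X}^{(s)}]$ vanishing identically on $U$ is the zero polynomial.

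First I would expand $P$ in the last block of variables, writing $P=\sum_\alpha c_\alpha(\vec{X}^{(1)},\ldots,\vec{X}^{(s-1)})\,(\vec{X}^{(s)})^\alpha$ with coefficients $c_\alpha$ in the polynomial ring of the first $s-1$ blocks. Fix an arbitrary point $a=(a_1,\ldots,a_{s-1})\in(\pi_1\times\cdots\times\pi_{s-1})(U)$. Applying hypothesis (ii) with $i=s-1$, the set $\pi_s\big((\{a\}\times\A^{n_s})\cap U\big)$ is Zariski dense in $\A^{n_s}$; since $P$ vanishes on $(\{a\}\times\A^{n_s})\cap U$, the polynomial $P(a,\vec{X}^{(s)})=\sum_\alpha c_\alpha(a)\,(\vec{X}^{(s)})^\alpha\in K[\vec{X}^{(s)}]$ vanishes on a Zariski dense subset of $\A^{n_s}$ and is therefore identically zero. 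Hence $c_\alpha(a)=0$ for every $\alpha$, and since $a$ was arbitrary, each $c_\alpha$ vanishes on the set $U':=(\pi_1\times\cdots\times\pi_{s-1})(U)\subset\A^{n_1}\times\cdots\times\A^{n_{s-1}}$.

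To finish I would check that $U'$ itself satisfies the hypotheses of the lemma for $s-1$ factors. Hypothesis (i) is immediate since $\pi_1(U')=\pi_1(U)$. For hypothesis (ii), observe that for $1\le i\le s-2$ one has $(\pi_1\times\cdots\times\pi_i)(U')=(\pi_1\times\cdots\times\pi_i)(U)$, and for $a$ in this common set the set $\pi_{i+1}\big((\{a\}\times\A^{n_{i+1}}\times\cdots\times\A^{n_{s-1}})\cap U'\big)$ coincides with $\pi_{i+1}\big((\{a\}\times\A^{n_{i+1}}\times\cdots\times\A^{n_s})\cap U\big)$, since both are exactly the set of $(i+1)$-st coordinates of points of $U$ whose first $i$ coordinates equal $a$; thus it is Zariski dense in $\A^{n_{i+1}}$ by hypothesis (ii) for $U$. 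By the induction hypothesis $U'$ is Zariski dense in $\A^{n_1}\times\cdots\times\A^{n_{s-1}}$, so each $c_\alpha$, vanishing on $U'$, is identically zero; therefore $P=0$, and $U$ is Zariski dense in $\A^{n_1}\times\cdots\times\A^{n_s}$. The only point that needs some care is the bookkeeping showing that $U'$ inherits condition (ii); the rest is just the elementary principle that a polynomial vanishing on a Zariski dense set must vanish, applied one block of variables at a time.
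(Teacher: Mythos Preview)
Your proof is correct and follows essentially the same approach as the paper's own proof: induction on $s$, expanding a vanishing polynomial in the last block of variables, using hypothesis (ii) at $i=s-1$ to kill the coefficients on $U'=(\pi_1\times\cdots\times\pi_{s-1})(U)$, and then invoking the induction hypothesis after checking that $U'$ inherits the assumptions. The only difference is that you spell out the verification that $U'$ satisfies condition (ii), which the paper dismisses as ``easy to see''.
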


\begin{proof}
By induction on $s$. For $s=1$ there is nothing to prove because condition $i)$ implies already the conclusion (the second condition is vacuous). Suppose now $s>1$. For each $i=1,\ldots,s$ denote by $T_i$ the $n_i-$tuple of coordinates of $\A^{n_i}$ and suppose that a polynomial $F(T_1,\ldots,T_s)$ vanishes on $U$. Consider $F$ as polynomial in the variables $T_s$:
\[
F=\sum_{\alpha\in\mathbb{N}_0^{n_s}} f_\alpha(T_1,\ldots,T_{s-1})\ T_s^\alpha.
\]
For an arbitrary point $a:=(a_1,\ldots,a_{s-1})\in \big(\pi_1\times \ldots\times \pi_{s-1}\big)(U)$ the $n_s-$variate polynomial $F(a,T_s)$ vanishes at any point $b\in \A^{n_s}$ with $(a,b)\in U$. In other words, the polynomial $F(a,T_s)$ vanishes on the set $\pi_{s}\big((\{a\}\times \A^{n_s})\cap U\big)$, which is Zariski dense in $\A^{n_s}$ by condition $ii)$ for $i=s-1$. Hence the coefficients $f_\alpha(a)$ are zero for all subindexes $\alpha$ and for all $a\in \big(\pi_1\times \ldots\times \pi_{s-1}\big)(U)$.

We consider now the set $U'\subset \A^{n_1}\times \cdots\times \A^{n_{s-1}}$ defined as
\[
U':=\big(\pi_1\times \ldots\times \pi_{s-1}\big)(U).
\]
It is easy to see that $U'$ satisfies conditions $i)$ and $ii)$. Hence, by induction hypothesis, the set $U'$ is Zariski dense and therefore each polynomial $f_\alpha$ must be identically zero. Hence $F=0$, which implies that $U$ is Zariski dense in $\A^{n_1}\times \cdots \times \A^{n_s}$.
\end{proof}

\begin{corollary} \label{secant_density}
There exists in $(\A^{n+1})^m$ a nonempty Zariski open set $\mathcal O$ of polynomials of degree one such that each element $\vec{f}=f_1,\ldots,f_m$ of $\mathcal O$ constitutes a secant sequence for $\frak{a}$.
\end{corollary}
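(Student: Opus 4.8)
The plan is to show that the set of those $m$-tuples of degree-one polynomials which form a secant sequence for ${\frak a}$ is constructible and Zariski dense in the parameter space, and then to read off the statement. A polynomial of degree one in $\vec{X}$ is determined by its vector of $n+1$ coefficients, so a point of $(\A^{n+1})^m$ is a sequence $\vec{f}=(f_1,\ldots,f_m)$ of such polynomials; writing $f_i=\ell_i-\alpha_i$ with $\ell_i$ the linear form given by the degree-one part of $f_i$ and $\alpha_i\in K$, one identifies $(\A^{n+1})^m$ with $(\A^n)^m\times\A^m$, the first factor recording the tuple of linear parts $\vec{\ell}=(\ell_1,\ldots,\ell_m)$ and the second the tuple of constants $\vec{\alpha}=(\alpha_1,\ldots,\alpha_m)$. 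Let $S\subseteq(\A^{n+1})^m$ be the set of $\vec{f}$ forming a secant sequence for ${\frak a}$. For each $j\in\{1,\ldots,m\}$ the locus of the $\vec{f}$ for which the fibre over $\vec{f}$ of the projection onto $(\A^{n+1})^m$ of the incidence set $\{(\vec{f},x)\ :\ x\in V({\frak a}),\ f_1(x)=\cdots=f_j(x)=0\}$ has dimension $m-j$ is constructible (Chevalley's theorem on the dimension of fibres), so $S$, the intersection of these $m$ loci, is a constructible subset of $(\A^{n+1})^m$. Since a dense constructible subset of an irreducible variety contains a nonempty Zariski open subset, it suffices to prove that $S$ is Zariski dense in $(\A^{n+1})^m$; the open set $\mathcal O$ sought in the statement is then any nonempty Zariski open subset of $S$.

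To prove this density I apply Lemma \ref{density:lemmaux} with $s=2$, $n_1=nm$ and $n_2=m$ to $S\subseteq(\A^n)^m\times\A^m$, where $\pi_1$ and $\pi_2$ are the two projections. Condition i) is met: by Remark \ref{remark_joos} there is a nonempty Zariski open set $\mathcal O_0\subseteq(\A^n)^m$ of secant families of linear forms, and for $\vec{\ell}\in\mathcal O_0$ the pair $(\vec{\ell},\vec{0})$ lies in $S$ (a secant family of linear forms is in particular a secant sequence of polynomials of degree one), so $\mathcal O_0\subseteq\pi_1(S)$ is dense in $(\A^n)^m$. For condition ii), fix $\vec{\ell}\in\pi_1(S)$ and choose $\vec{\alpha}_0\in\A^m$ with $(\vec{\ell},\vec{\alpha}_0)\in S$, i.e. such that $\vec{g}:=\vec{\ell}-\vec{\alpha}_0$ is a secant sequence for ${\frak a}$; Corollary \ref{density:corol} applied to $\vec{g}$ gives a nonempty Zariski open $U'\subseteq\A^m$ such that $\vec{g}-\vec{b}$ is a secant sequence for every $\vec{b}\in U'$. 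As $\vec{g}-\vec{b}=\vec{\ell}-(\vec{\alpha}_0+\vec{b})$, the translate $\vec{\alpha}_0+U'$ is contained in $\pi_2\big((\{\vec{\ell}\}\times\A^m)\cap S\big)$, which is therefore Zariski dense in $\A^m$. Lemma \ref{density:lemmaux} now yields that $S$ is Zariski dense in $(\A^{n+1})^m$, which is the required conclusion.

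The argument is essentially formal once Remark \ref{remark_joos} and Corollary \ref{density:corol} are at hand: the former furnishes the base point of the fibred density argument, and the latter makes the fibre of $S$ above a prescribed tuple of linear parts dense. The one thing requiring a little care is the bookkeeping needed to apply Corollary \ref{density:corol} verbatim — realizing an arbitrary degree-one perturbation $\vec{\ell}-\vec{\alpha}$ as a translate $(\vec{\ell}-\vec{\alpha}_0)-(\vec{\alpha}-\vec{\alpha}_0)$ of a sequence already known to be secant — together with the standard (but not wholly trivial) passage from ``Zariski dense'' to ``contains a nonempty Zariski open set'', which is precisely where the constructibility of $S$ enters.
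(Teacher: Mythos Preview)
Your proof is correct and follows the same route as the paper: constructibility of the secant locus, together with Remark \ref{remark_joos}, Corollary \ref{density:corol} and Lemma \ref{density:lemmaux}, gives density and hence a nonempty open set. You spell out in detail the application of Lemma \ref{density:lemmaux} (splitting $(\A^{n+1})^m=(\A^n)^m\times\A^m$ and checking the two hypotheses) and justify constructibility via Chevalley on fibre dimensions rather than via first-order definability, but these are cosmetic differences.
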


\begin{proof}
The condition to form a secant sequence is expressible in first order logic and hence constructible. From Corollary \ref{density:corol}, Remark \ref{remark_joos} and Lemma \ref{density:lemmaux} we deduce that the sequences $\vec{f}=f_1,\ldots,f_m$ of degree one polynomials which are secant sequences for $\frak{a}$ constitutes a Zariski dense subset of  $(\A^{n+1})^m$. Since this set is also constructible the corollary follows.
\end{proof}

We are now going to analyze the ubiquity of \emph{regular sequences} with respect to an equidimensional polynomial ideal $\frak a\subset K[\vec{X}]$ of positive dimension. We start with the following simple observation.

\begin{remark} \label{density:one}
Let $\frak{p}$ be a prime ideal of the polynomial ring $K[\vec{X}]$ of positive dimension. Then the set
\[
T:=\{(a_1,\ldots,a_n)\in K^n\ |\ \textit{there\ exists}\ \lambda\in K \ \textit{such\ that}\  a_1X_1+\cdots+a_nX_n+\lambda \in \mathfrak{p}\}
\]
is a proper linear subspace of $K^n$.
\end{remark}

\begin{proof}
The set $T$ is clearly a linear subspace because $\mathfrak{p}$ is an ideal. If $T=K^n$ the elements of the canonical basis of $K^n$ belong to $T$. Hence, there exist scalars $\lambda_1,\ldots,\lambda_n$ such that $X_1+\lambda_1,\ldots,X_n+\lambda_n$ belong to $\mathfrak{p}$. Thus the ideal $\frak{p}$ is maximal. This contradicts the assumption that $\frak{p}$ is of positive dimension.
\end{proof}\\

From Remark \ref{density:one} and Lemma \ref{density:lemmaux} we deduce now the following result about the density of degree one regular sequences with respect to an equidimensional polynomial ideal.

\begin{proposition} \label{density:corol2}
Let $\frak{a}\subset K[\vec{X}]$ be an equidimensional ideal of dimension $m>0$. Then there exists a Zariski dense subset $U$ of $(\A^{n+1})^m$ such that for all $(\vec{a_1},\ldots,\vec{a_m})\in U$ with $\vec{a_i}:=(a_{1i},\ldots,a_{ni},a_{(n+1)i})$, $1\le i\le m$, the polynomials $\ell_1,\ldots,\ell_m$ of degree $1$ defined by
\[\ell_i:= a_{1i}X_1+\cdots+a_{ni}X_n+a_{(n+1)i}\]
form a regular sequence with respect to the ideal $\frak{a}$.
\end{proposition}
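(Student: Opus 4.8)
The plan is to argue by induction on the dimension $m\ge 1$, using at each step only Remark \ref{density:one} (read as: membership of a linear form in a fixed prime ideal is governed by a single linear equation on its coefficients) together with Lemma \ref{density:lemmaux}, plus the standard facts that $K[\vec X]$ is catenary and that affine domains over $K$ are equidimensional. The starting observation is that a sequence $\ell_1,\ldots,\ell_m$ of polynomials of degree one is a regular sequence with respect to ${\frak a}$ if and only if, writing ${\frak b}_j:={\frak a}+(\ell_1,\ldots,\ell_j)$, each $\ell_j$ lies in no associated prime of ${\frak b}_{j-1}$ and ${\frak b}_m\neq(1)$; moreover, if ${\frak b}_{j-1}$ is equidimensional of dimension $m-j+1$, if $\ell_j$ lies in no associated prime of ${\frak b}_{j-1}$, and if ${\frak b}_j\neq(1)$, then ${\frak b}_j$ is again equidimensional, of dimension $m-j$ (Krull's principal ideal theorem together with the dimension formula for affine domains). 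Hence it suffices to produce regular sequences inside a Zariski dense subset of $(\A^{n+1})^m$.

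The technical core is the following claim: if ${\frak b}\subset K[\vec X]$ is equidimensional of dimension $d\ge 1$, then the set $\Omega({\frak b})$ of degree one polynomials $\ell$ that are non-zero-divisors modulo ${\frak b}$ and satisfy ${\frak b}+(\ell)\neq(1)$ is Zariski dense in $\A^{n+1}$. To prove this I would write $\ell=L+c$ with $L=a_1X_1+\cdots+a_nX_n$ its linear part and $c$ its constant term, splitting $\A^{n+1}=\A^n\times\A^1$. By Remark \ref{density:one}, for each associated prime ${\frak p}$ of ${\frak b}$ of positive dimension the linear parts $L$ for which some $\lambda$ satisfies $L+\lambda\in{\frak p}$ form a proper linear subspace of $\A^n$; so, for $L$ outside a suitable proper closed subset of $\A^n$, the form $\ell$ avoids every positive-dimensional associated prime of ${\frak b}$ independently of $c$, and for each isolated prime ${\frak p}$ the restriction $L|_{V({\frak p})}$ is non-constant, hence dominant onto $\A^1$. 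Once such an $L$ is fixed, only finitely many values of $c$ are excluded: those putting $\ell$ into one of the finitely many zero-dimensional associated primes of ${\frak b}$, and those for which $\ell$ meets no component of $V({\frak b})$, the latter being finitely many because the images of the dominant maps $L|_{V({\frak p})}$ are cofinite in $\A^1$. Lemma \ref{density:lemmaux} with $s=2$, applied to $\A^{n+1}=\A^n\times\A^1$, then gives the density of $\Omega({\frak b})$; and for $\ell\in\Omega({\frak b})$ the ideal ${\frak b}+(\ell)$ is equidimensional of dimension $d-1$.

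The induction is then short. For $m=1$ the set $\Omega({\frak a})$ is exactly the set of length one regular sequences with respect to ${\frak a}$, so the base case is the core claim for $d=1$. For $m\ge 2$, assume the statement for dimension $m-1$; pick $\ell_1\in\Omega({\frak a})$, so that ${\frak b}_1={\frak a}+(\ell_1)$ is equidimensional of dimension $m-1$. By the induction hypothesis the degree one regular sequences with respect to ${\frak b}_1$ form a Zariski dense subset of $(\A^{n+1})^{m-1}$, and because ${\frak a}+(\ell_1,\ldots,\ell_j)={\frak b}_1+(\ell_2,\ldots,\ell_j)$ for all $j\ge 1$, the concatenation of any $\ell_1\in\Omega({\frak a})$ with any regular sequence $\ell_2,\ldots,\ell_m$ for ${\frak b}_1$ is a regular sequence with respect to ${\frak a}$. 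A second application of Lemma \ref{density:lemmaux}, with $s=2$ and the splitting $(\A^{n+1})^m=\A^{n+1}\times(\A^{n+1})^{m-1}$, shows that these concatenations are Zariski dense in $(\A^{n+1})^m$, and a fortiori so is the whole set $U$ of degree one regular sequences with respect to ${\frak a}$.

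The main obstacle I expect is the presence of embedded primes. A regular sequence is strictly more than a secant sequence (Corollary \ref{secant_density} only handles the latter), and the embedded primes to be avoided at step $j$ belong to the moving ideal ${\frak b}_{j-1}$, so they cannot be listed in advance; arranging the proof as an induction on $m$ feeding into Lemma \ref{density:lemmaux} is precisely what reduces it, at each stage, to avoiding the associated primes of a single, already fixed ideal, which is a genuine Zariski open dense condition. A subsidiary but essential point, and the reason Remark \ref{density:one} is phrased for primes of positive dimension, is that the successive ideals must be kept \emph{proper}: this is exactly what forces the linear parts to restrict to dominant maps onto $\A^1$ on the positive-dimensional components of the varieties involved, rather than merely to avoid their associated primes.
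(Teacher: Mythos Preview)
Your proof is correct and takes essentially the same approach as the paper. Both split $\A^{n+1}=\A^n\times\A^1$, invoke Remark~\ref{density:one} for the positive-dimensional associated primes and a separate finiteness argument for the maximal ones, use dominance of the linear part on the isolated components to keep the successive ideals proper, apply Lemma~\ref{density:lemmaux} to obtain density of the single-step regular elements, and then iterate by induction on $m$ with a second application of Lemma~\ref{density:lemmaux}; your packaging via the set $\Omega({\frak b})$ is slightly cleaner, but the mathematical content is the same.
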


\begin{proof}
We start by the construction of a suitable regular generic polynomial of degree $1$. Let $\frak{p}_1,\ldots,\frak{p}_t$ be the associated primes of $\mathfrak{a}$ of positive dimension and let  $\frak{p}_{t+1},\ldots,\frak{p}_r$ those  associated primes which are maximal ideals.  Observe $1\le t$ since the ideal $\frak{a}$ is of positive dimension. For each $j=1,\ldots,t$ let $T_j$ be the proper linear subspace of $\A^n$ associated to $\mathfrak{p}_j$ following Remark \ref{density:one}. Thus $U_1:=\A^n\setminus\bigcup_{j=1}^t T_j$ is a constructible Zariski dense subset of $\A^n$. Observe that for $1\le j\le t$ and any (homogeneous) linear form $\ell$ whose coefficients belong to $U_1$, the constructible set $\ell(V(\mathfrak{p}_j))$ is Zariski dense in $\mathbb{A}^1$. Thus, the intersection $U_\ell:=\bigcap_{j=1}^s-\ell(V(\mathfrak{p}_j))$ is constructible and Zariski dense too. Hence $\mathcal{U}:=\{(\ell,u)\ ;\ \ell\in U_1,u\in U_\ell\}$ is a constructible subset of $\A^{n+1}$ which is Zariski dense following Lemma \ref{density:lemmaux}.

Now, for each maximal ideal $\mathfrak{p}_j$, $t< j \le r$, associated to $\mathfrak{a}$ we consider $W_j\subset \A^{n+1}$, the $n-$dimensional linear subspace of the polynomials of degree one contained in $\frak{p}_j$. Since $\mathcal{U}$ is Zariski dense and constructible in $\A^{n+1}$, we conclude that
\[
\mathcal{U}_1:=\mathcal{U}\setminus \bigcup_{j=t+1}^r W_j
\]
is constructible and Zariski dense in $\A^{n+1}$.

Now we consider an arbitrary polynomial $\ell_1:=a_1X_1+\cdots a_nX_n+a_{n+1}$ with $(a_1,\ldots,a_{n+1})\in \mathcal{U}_1$. Then, $\ell_1\notin \frak{p}_j$ for $t<j\le r$  since the vector $(a_1,\ldots,a_{n+1})$ does not belong to $\cup_{j=t+1}^r W_j$. On the other hand, we have $\ell_1\notin \frak{p}_j$ for all $1\le j\le t$ because the homogeneous part of $\ell_1$ is in $U_1$.

In other words, for $\ell_1\in \mathcal{U}_1$ we infer that $\ell_1$  does not belong to any associated prime of $\frak{a}$. Thus $\ell_1$ is not a zero divisor mod $\frak{a}$.

In order to prove that $\ell_1$ is a regular element it suffices to show that $\ell_1$ is not a unity modulo $\frak{a}$. Otherwise $\ell_1$ must be also a unity mod $\frak{p}_1$, i.e. $\ell_1(p)\ne 0$ for all $p\in V(\mathfrak{p}_1)$ and then $-a_{n+1}$ is not in the image $\ell(V(\mathfrak{p}_1))$ which contradicts the fact that $a_{n+1}\in U_\ell$ (recall that $1\le t$ holds).

Summarizing, we have shown that there exists a (constructible) Zariski dense subset $\mathcal{U}_1$ of $\A^{n+1}$ such that any polynomial of degree one with coefficients in $\mathcal{U}_1$ is regular mod $\frak{a}$.

Now the corollary follows by an inductive argument based on Lemma \ref{density:lemmaux}: take any $\ell_1$ with coefficients in $\mathcal{U}_1$. Then by Krull's Principal Ideal Theorem, the ideal $\mathfrak{a}_1:=\mathfrak{a} +(\ell_1)$ is equidimensional of dimension $m-1$. If $m=1$ there is nothing to prove.

Assume $m>1$. We build a constructible Zariski dense subset $\mathcal{U}_{2,\ell_1}$ of $\A^{n+1}$ (depending on $\ell_1$) such that for any $\ell_2$ with coefficients in this set, the pair $\ell_1,\ell_2$ is a regular sequence of length $2$ for $\mathfrak{a}$. By repeating this argument for each $\ell_1$, Lemma \ref{density:lemmaux} ensures the existence of a Zariski dense subset $\mathcal{U}_2$ of $\A^{n+1}\times \A^{n+1}$ representing the coefficients of regular sequences of length $2$ for $\mathfrak{a}$, formed by polynomials of degree $1$. The corollary follows now inductively after $m$ steps.
\end{proof}

\section{Generic fiber vs. special fibers}

Throughout this section let notations and assumptions be the same as in the previous section. Let us now consider a minimal primary decomposition of the equidimensional ideal
${\frak a}$, of dimension $m$, namely
$${\frak a}={\frak q}_1\cap \cdots \cap {\frak q}_r,$$
where each ideal ${\frak q}_i$ is ${\frak p}_i-$primary. Assume $\vec{f}$ is a secant sequence with respect to the ideal ${\frak a}$. Let us denote by $K[\vec{f}]$ the image of $K[Y_1,\ldots, Y_m]$ in $A=K[X_1,\ldots, X_n]/{\frak a}$.

Let us recall the polynomial mapping
$\vec{f}: V({\frak a})\longrightarrow \A^m$ and let $\vec{f}_i$ be its restriction to each of the components $V({\frak p}_i)$:
$$\vec{f}_i:=\vec{f}\mid_{V({\frak p}_i)}: V({\frak p}_i)\longrightarrow \A^m,\textrm{\ where\ } 1\le i\le r.$$

Taking into account Proposition \ref{morfismo-secante:prop} we may assume without loss of generality that there exists an index $1\le s\le r$ such that $\vec{f}_i$ is dominant if and only if $1\le i\le s$.
\begin{lemma}\label{primarias-jugando:lema} With these notations the following statements hold:
\begin{enumerate}
\item[i)] For any $i$, $s+1\leq i \leq r$, there exists a non-zero polynomial $Q_i\in K[Y_1,\ldots, Y_m]$ such that $Q_i(f_1,\ldots, f_m)$ belongs  to ${\frak q}_i$.
\item[ii)] For any $i$, $1\leq i \leq s$, and for any $j$, $1\leq j \leq n$, there is a polynomial
$Q_{ij}\in K[Y_1,\ldots, Y_m][T]$, of positive degree in the variable $T$ such that $Q_{ij}(f_1,\ldots, f_m)(X_j)$ belongs to ${\frak q}_i$.
\end{enumerate}
\end{lemma}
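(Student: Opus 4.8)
The plan is to exploit the dichotomy encoded by the index $s$: for the components where $\vec f_i$ is \emph{not} dominant the image is contained in a proper subvariety of $\A^m$, which produces algebraic relations among the $f_k$ on $V(\frak p_i)$; for the dominant components the fiber dimension is zero, which forces the coordinate functions $X_j$ to be algebraic over the subfield generated by the $f_k$. I would treat the two statements separately, working on the level of coordinate rings and then clearing denominators to descend into the primary ideals $\frak q_i$.

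For part (i), fix $i$ with $s+1\le i\le r$. Since $\vec f_i$ is not dominant, the Zariski closure $\overline{\vec f_i(V(\frak p_i))}$ is a proper closed subset of $\A^m$, so there is a non-zero polynomial $Q_i\in K[Y_1,\ldots,Y_m]$ vanishing on it; equivalently $Q_i(f_1,\ldots,f_m)$ vanishes identically on $V(\frak p_i)$, hence lies in $I(V(\frak p_i))=\frak p_i$ (using that $K$ is algebraically closed and $\frak p_i$ is prime, so radical). Now $\frak p_i=\sqrt{\frak q_i}$, so some power $Q_i^{N}(f_1,\ldots,f_m)\in\frak q_i$; replacing $Q_i$ by $Q_i^{N}$ (still non-zero) gives the claim. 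This step is essentially immediate from the Dimension of the Fibers theorem already invoked in Proposition~\ref{morfismo-secante:prop}.

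For part (ii), fix $i$ with $1\le i\le s$ and $j$ with $1\le j\le n$. Because $\vec f_i\colon V(\frak p_i)\to\A^m$ is dominant between irreducible varieties of the \emph{same} dimension $m$ (here one uses that $\frak a$ is equidimensional of dimension $m$, so $\dim V(\frak p_i)=m$), the induced map $\vec f_i^*\colon K(Y_1,\ldots,Y_m)\hookrightarrow K(V(\frak p_i))$ is a finite, hence algebraic, field extension. In particular the residue class of $X_j$ in $K(V(\frak p_i))$ satisfies a non-trivial algebraic equation over $K(f_1,\ldots,f_m)$; clearing denominators, there is a polynomial $Q_{ij}\in K[Y_1,\ldots,Y_m][T]$ of positive degree in $T$ with $Q_{ij}(f_1,\ldots,f_m)(X_j)$ vanishing on $V(\frak p_i)$, i.e. lying in $\frak p_i$. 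Taking a suitable power to pass from $\frak p_i=\sqrt{\frak q_i}$ into $\frak q_i$ — and noting that raising to a power does not kill the positive $T$-degree provided one keeps the top term, which one can arrange by choosing $Q_{ij}$ with $T$-leading coefficient a non-zero element of $K$, e.g. after normalizing the minimal polynomial and clearing the denominator into the constant term only if necessary — yields the stated $Q_{ij}$.

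The main obstacle I anticipate is the bookkeeping in the last reduction step of (ii): one wants the polynomial to remain of positive degree in $T$ \emph{after} raising to a power to enter $\frak q_i$, so some care is needed in how the algebraic relation is cleared of denominators. The clean way is: let $P(T)\in K(f_1,\ldots,f_m)[T]$ be the minimal polynomial of the class of $X_j$ over $K(f_1,\ldots,f_m)$, write its coefficients as quotients of elements of $K[f_1,\ldots,f_m]$ with common denominator $d(f_1,\ldots,f_m)$, so that $\widetilde Q(T):=d\cdot P(T)$ has coefficients in $K[f_1,\ldots,f_m]$ and $\deg_T\widetilde Q=\deg_T P>0$; then $\widetilde Q(X_j)\in\frak p_i$, and $\widetilde Q(X_j)^{N}\in\frak q_i$ for $N$ large has $T$-degree $N\deg_T P>0$. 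So one sets $Q_{ij}(Y_1,\ldots,Y_m,T)$ to be (a polynomial lift of) $\widetilde Q(T)^{N}$. Everything else is soft.
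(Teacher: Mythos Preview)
Your proof is correct and follows essentially the same route as the paper: for (i) use non-dominance to produce a nonzero $G_i\in K[Y_1,\ldots,Y_m]$ with $G_i(\vec f)\in\frak p_i$ and then raise to a power to land in $\frak q_i$; for (ii) use that dominance onto $\A^m$ forces $\dim V(\frak p_i)=m$, hence $K(\vec f)\hookrightarrow K(V(\frak p_i))$ is a finite field extension, so each $X_j$ is algebraic, clear denominators, and again raise to a power. Your closing worry is unnecessary: since $K[Y_1,\ldots,Y_m]$ is a domain, any power of a polynomial of positive $T$-degree in $K[Y_1,\ldots,Y_m][T]$ automatically has positive $T$-degree, so no special normalization of the leading coefficient is required.
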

\begin{proof} For the proof of statement $i)$, observe that for any $i$, $s+1\leq i \leq r$, the image $\vec{f}(V({\frak p}_i))$ is contained in a proper hypersurface of $\A^m$. Taking $G_i\in K[Y_1,\ldots, Y_m]$ the minimal equation of this hypersurface, we conclude that $G_i(f_1,\ldots, f_m)$ belongs to ${\frak p}_i$. Since the radical of ${ {\frak q}_i}$ is ${\frak p}_i$, there exists some power $Q_i:=G_i^{k_i}$ such that $Q_i(f_1,\ldots, f_m)$ belongs to ${\frak q}_i$, as wanted.

For the proof of statement $ii)$, let us write by ${\frak p}:={\frak p}_i$ where $1\leq i \leq s$. As $\vec{f}_i(V({\frak p}))$ is Zariski dense in $\A^m$, and $\frak{a}$ is $m-$equidimensional we have a monomorphism of finitely generated $K-$algebras of the same Krull dimension:
    $$\vec{f}_i^* : K[Y_1,\ldots, Y_m]\hookrightarrow K[X_1,\ldots, X_n]/{\frak p},$$
    with $\vec{f}_i^*(Y_k) = f_k+ {\frak p}$, $1\le k\le m$. Thus we may consider $K[\vec{f}]$ as a subalgebra of $K[X_1,\ldots,X_n]/\frak{p}$. Let $S:=K[\vec{f}]\setminus \{0\}$, Then $K(\vec{f}):=S^{-1}K[\vec{f}]$ is the quotient field of $K[\vec{f}]$ and we have an extension of domains of the same Krull dimension:
    \[K(\vec{f})\hookrightarrow S^{-1}(K[X_1,\ldots,X_n]/{\frak p}),\]
    which is therefore a finite field extension. Then, for any $1\le j\le n$ the residue class  $X_j+{\frak p}$ is algebraic over $K(\vec{f})$. In particular, there is a polynomial $H_j\in K[Y_1,\ldots, Y_m][T]$ (depending on $\frak{p}$), of positive degree in $T$ such that $H_j(f_1,\ldots, f_m)(X_j)$ belongs to ${\frak p}$. As ${\frak p}= {\frak p}_i$, let us write $G_{ij}:=H_j$. Since $\frak{p}_i$ is the radical of $\frak{q}_i$ , there is some power $Q_{ij}:=G_{ij}^{k_{ij}}$ such that $Q_{ij}(f_1,\ldots, f_m)(X_j)$ belongs to ${\frak q}_i$ as wanted.
\end{proof}\\

For the next statement observe that Corollary \ref{algebraically-independent:corol} implies that for any nonzero polynomial $q\in K[Y_1,\ldots,Y_m]$ we have $q(\vec{f})=q(f_1,\ldots,f_m)\ne 0$. Thus $K[\vec{f}]\setminus \{0\}$ forms a multiplicative closed set of $A$ which we denote in the sequel by $S(\vec{f})$. Observe that $K(\vec{f}):=S(\vec{f})^{-1}K[\vec{f}]$ is the quotient field of $K[\vec{f}]$.

\begin{proposition} \label{equality1:prop}
There is a nonzero polynomial $q\in K[Y_1,\ldots, Y_m]$ such that for $q(\vec{f})=q(f_1,\ldots,f_m)\in S(\vec{f})$ the localizations of $K[\vec{f}]$ and $A$ by $q(\vec{f})$
define an integral ring extension:
\[K[\vec{f}]_{q(\vec{f})}\hookrightarrow A_{q(\vec{f})}.\]
In particular $A_{q(\vec{f})}$ is a finite $K[\vec{f}]_{q(\vec{f})}-$module.
\end{proposition}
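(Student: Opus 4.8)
The plan is to combine Lemma \ref{primarias-jugando:lema} across all the primary components and clear denominators. First I would treat the components $\frak{q}_i$ with $s+1 \le i \le r$: by part i) of Lemma \ref{primarias-jugando:lema} there is a nonzero $Q_i \in K[Y_1,\ldots,Y_m]$ with $Q_i(\vec{f}) \in \frak{q}_i$. Next, for the dominant components $1 \le i \le s$, part ii) gives, for each variable $X_j$, a polynomial $Q_{ij} \in K[Y_1,\ldots,Y_m][T]$ of positive degree in $T$ such that $Q_{ij}(\vec{f})(X_j) \in \frak{q}_i$; writing $Q_{ij}(\vec{f})(T) = c_{ij}(\vec{f})\,T^{d_{ij}} + (\textrm{lower order in } T)$ with leading coefficient $c_{ij} \in K[Y_1,\ldots,Y_m]\setminus\{0\}$, this exhibits $X_j + \frak{q}_i$ as satisfying a monic equation over $K[\vec{f}]_{c_{ij}(\vec{f})}$, i.e. integral over that localization.

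The key step is then to choose a single denominator that works simultaneously. Set
\[
q := \Big(\prod_{i=s+1}^{r} Q_i\Big)\cdot\Big(\prod_{i=1}^{s}\prod_{j=1}^{n} c_{ij}\Big) \in K[Y_1,\ldots,Y_m].
\]
This is a nonzero polynomial (a product of nonzero polynomials in a domain), and by Corollary \ref{algebraically-independent:corol} we have $q(\vec{f}) \ne 0$, so $q(\vec{f}) \in S(\vec{f})$ and the localization $A_{q(\vec{f})}$ makes sense. I claim $K[\vec{f}]_{q(\vec{f})} \hookrightarrow A_{q(\vec{f})}$ is integral. Since $A$ is generated as a $K[\vec{f}]$-algebra by the classes $X_1+\frak{a},\ldots,X_n+\frak{a}$, it suffices to show each $X_j + \frak{a}$ is integral over $K[\vec{f}]_{q(\vec{f})}$. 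For this I would argue in $A/\frak{q}_i = K[\vec{X}]/\frak{q}_i$ component by component, via the injection $A \hookrightarrow \prod_{i=1}^{r} K[\vec{X}]/\frak{q}_i$ coming from the primary decomposition $\frak{a} = \frak{q}_1 \cap \cdots \cap \frak{q}_r$: for $1 \le i \le s$, after inverting $c_{ij}(\vec{f})$ (a factor of $q(\vec{f})$) the element $X_j + \frak{q}_i$ satisfies a monic polynomial over $K[\vec{f}]_{q(\vec{f})}$; for $s+1 \le i \le r$, after inverting $Q_i(\vec{f})$ the component ring $(K[\vec{X}]/\frak{q}_i)_{q(\vec{f})}$ is the zero ring since $Q_i(\vec{f})$ is both invertible and nilpotent there (it lies in $\frak{q}_i$ whose radical is the maximal ideal $\frak{p}_i$), so that component contributes nothing and $X_j$ is trivially integral there. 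Since integrality can be checked after a faithfully-flat / injective base change into a finite product, and each factor is integral, $X_j + \frak{a}$ is integral over $K[\vec{f}]_{q(\vec{f})}$. As $A_{q(\vec{f})}$ is a finitely generated algebra over $K[\vec{f}]_{q(\vec{f})}$ (generated by finitely many integral elements), it is a finite module over it, giving the last assertion.

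The main obstacle I anticipate is the bookkeeping needed to pass from "integral in each component $K[\vec{X}]/\frak{q}_i$" to "integral in $A$": one must be careful that the embedding $A_{q(\vec{f})} \hookrightarrow \prod_i (K[\vec{X}]/\frak{q}_i)_{q(\vec{f})}$ is still injective after localization (it is, since localization is exact and $\frak{a} = \bigcap_i \frak{q}_i$ localizes to $\bigcap_i (\frak{q}_i)_{q(\vec{f})}$), and that a product of integral ring extensions is integral so that the image of $X_j$, being integral in the product, yields — via a monic polynomial with coefficients pulled back from $K[\vec{f}]_{q(\vec{f})}$, which exists because we can multiply the finitely many component-wise monic relations — a monic relation for $X_j + \frak{a}$ itself. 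The treatment of the non-dominant components, where the relevant component ring becomes zero after localization, is the conceptually clean point that makes the single denominator $q$ suffice; everything else is the routine clearing of denominators and taking least common multiples of the finitely many exponents and coefficients involved.
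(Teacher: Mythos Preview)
Your proof is correct and follows essentially the same approach as the paper: you construct the identical polynomial $q$ from Lemma \ref{primarias-jugando:lema} (product of the $Q_i$ for $i>s$ and the leading coefficients of the $Q_{ij}$ for $i\le s$), and establish integrality of each $X_j$ by combining the relations $Q_{ij}$. The paper's write-up is marginally more direct --- it observes that the primary decomposition of $(0)$ in $A_{q(\vec{f})}$ reduces to $\frak{q}_1^e\cap\cdots\cap\frak{q}_s^e$ and then uses $\prod_{i=1}^s Q_{ij}$ as a single integral dependence for $X_j$ in $A_{q(\vec{f})}$ itself, bypassing your detour through the product ring $\prod_i K[\vec{X}]/\frak{q}_i$ --- but the content is the same. (One small slip: for $i>s$ you call $\frak{p}_i$ a ``maximal ideal'', which it need not be; however your conclusion stands, since $Q_i(\vec{f})\in\frak{q}_i$ already has image zero in $K[\vec{X}]/\frak{q}_i$, so the localized component ring vanishes.)
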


\begin{proof}
We use the notations of Lemma \ref{primarias-jugando:lema}. Let us consider the polynomial $q\in K[Y_1,\ldots, Y_m]$ given by
\[q:=\left(\prod_{i=1}^s \prod_{j=1}^n a_{ij}(Y_1,\ldots, Y_m)\right)\times \left( \prod_{i=s+1}^r Q_i(Y_1,\ldots, Y_m)\right)\in K[Y_1,\ldots, Y_m],\]
where for $1\le i\le s$ and $1\le j\le n$ the polynomial $a_{ij}$ is the non-zero leading coefficient of the polynomial $Q_{ij}\in K[Y_1,\ldots, Y_m][T]$ which satisfies the condition $a_{ij}(f_1,\ldots, f_m)\not\in {\frak p}_i$ (see the proof of Lemma \ref{primarias-jugando:lema}). As $q$ is a non-zero polynomial we have $q(\vec{f})=q(f_1,\ldots,f_m)\not\in {\frak a}$. Thus we obtain a $K-$algebra extension
$$K[\vec{f}]_{q(\vec{f})}\hookrightarrow A_{q(\vec{f})}.$$
As $q(\vec{f})\in {\frak q}_i$ for $s+1\leq i \leq r$, the minimal primary decomposition of the ideal $(0)$ in
$A_{q(\vec{f})}$ is given by:
$$(0)= {\frak q}_1^e \cap \cdots \cap {\frak q}_s^e,$$
where ${\frak q}_i^e$ is the extension of ${\frak q}_i$ to $A_{q(\vec{f})}$. In particular, the ring extension above
is integral since for every $1\le j\le n$ the residue class $X_j+\frak{a}$ satisfies the algebraic dependence equation given by $\prod_{i=1}^s Q_{ij}$ described in Lemma \ref{primarias-jugando:lema}. In particular, as $A_{q(\vec{f})}$ is a finitely generated $K[\vec{f}]_{q(\vec{f})}-$algebra and as every residue class $X_j+\frak{a}$, $1\le j\le n$ is integral over $K[\vec{f}]_{q(\vec{f})}$, the algebra
$A_{q(\vec{f})}$ is a finite $K[\vec{f}]_{q(\vec{f})}-$module.\end{proof}\\

We deduce from Proposition \ref{equality1:prop} the following statement.

\begin{corollary} \label{finite} The localized ring $S(\vec{f})^{-1}A$ is a $K(\vec{f})-$algebra of finite dimension.
\end{corollary}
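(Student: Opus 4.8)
The plan is to deduce this immediately from Proposition \ref{equality1:prop} by a short localization argument. First I would observe that the element $q(\vec{f})=q(f_1,\ldots,f_m)$ furnished by Proposition \ref{equality1:prop} lies in the multiplicative set $S(\vec{f})=K[\vec{f}]\setminus\{0\}$; recall that Corollary \ref{algebraically-independent:corol} guarantees $q(\vec{f})\neq 0$ in $A$ and, more generally, that $K[\vec{f}]$ is a domain, so that $S(\vec{f})$ is indeed a multiplicative subset of $A$ not containing $0$ and $K(\vec{f})=S(\vec{f})^{-1}K[\vec{f}]$ is its fraction field. Since $q(\vec{f})\in S(\vec{f})$, the localization of $A$ (resp. of $K[\vec{f}]$) at $S(\vec{f})$ factors through the localization at the powers of $q(\vec{f})$: writing $\overline{S}$ for the image of $S(\vec{f})$ in $A_{q(\vec{f})}$ (resp. in $K[\vec{f}]_{q(\vec{f})}$), one has canonical identifications
\[
S(\vec{f})^{-1}A=\overline{S}^{-1}\big(A_{q(\vec{f})}\big),\qquad K(\vec{f})=\overline{S}^{-1}\big(K[\vec{f}]_{q(\vec{f})}\big).
\]

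Next I would invoke Proposition \ref{equality1:prop} itself: the extension $K[\vec{f}]_{q(\vec{f})}\hookrightarrow A_{q(\vec{f})}$ is integral and $A_{q(\vec{f})}$ is a \emph{finite} $K[\vec{f}]_{q(\vec{f})}$-module, say generated by finitely many elements $w_1,\ldots,w_N\in A_{q(\vec{f})}$ (one may take suitable monomials in the residue classes $X_1+\mathfrak{a},\ldots,X_n+\mathfrak{a}$, of degree bounded in terms of the integral dependence equations $\prod_{i=1}^{s}Q_{ij}$ appearing in the proof of that proposition). Because localization commutes with the formation of finitely generated modules, the images of $w_1,\ldots,w_N$ generate $\overline{S}^{-1}(A_{q(\vec{f})})=S(\vec{f})^{-1}A$ as a module over $\overline{S}^{-1}(K[\vec{f}]_{q(\vec{f})})=K(\vec{f})$. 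Hence $S(\vec{f})^{-1}A$ is a finitely generated module over the field $K(\vec{f})$, i.e. a $K(\vec{f})$-vector space of dimension at most $N$, and in particular a $K(\vec{f})$-algebra of finite dimension.

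I do not expect any genuine obstacle here; the statement is essentially a reformulation of Proposition \ref{equality1:prop}. The only point deserving a line of care is the compatibility of the two successive localizations used in the first step — namely that localizing $K[\vec{f}]_{q(\vec{f})}$ (resp. $A_{q(\vec{f})}$) at the image of $S(\vec{f})$ returns $K(\vec{f})$ (resp. $S(\vec{f})^{-1}A$) — which follows at once from the universal property of localization, using only that $q(\vec{f})\in S(\vec{f})$ and that $S(\vec{f})$ is already multiplicatively closed.
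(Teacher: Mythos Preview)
Your proposal is correct and follows exactly the approach intended by the paper, which simply states that the corollary is deduced from Proposition~\ref{equality1:prop} without spelling out the localization details. Your argument that $q(\vec{f})\in S(\vec{f})$ allows one to pass from the finite module $A_{q(\vec{f})}$ over $K[\vec{f}]_{q(\vec{f})}$ to the finite-dimensional $K(\vec{f})$-vector space $S(\vec{f})^{-1}A$ is precisely the routine verification the paper leaves implicit.
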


\begin{proposition}\label{libre-multiplicativo:prop}
There is a nonzero polynomial $p\in K[Y_1,\ldots, Y_m]$ such that $p(f):=p(f_1,\ldots,f_m)\in S(\vec{f})$ has the following property. The localizations by $p(f)$
define an integral ring extension:
$$K[\vec{f}]_{p(\vec{f})}\hookrightarrow A_{p(\vec{f})},$$
and $A_{p(\vec{f})}$ is a free $K[\vec{f}]_{p(\vec{f})}-$module of finite rank. Moreover,  its rank satisfies the condition
$$\textrm{rank}_{K[\vec{f}]_{p(\vec{f})}}(A_{p(\vec{f})})= \dim_{K(\vec{f})}S(\vec{f})^{-1}A.$$
\end{proposition}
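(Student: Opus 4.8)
The plan is to start from the integral extension $K[\vec{f}]_{q(\vec{f})}\hookrightarrow A_{q(\vec{f})}$ produced by Proposition \ref{equality1:prop}, where $A_{q(\vec{f})}$ is already a finite module over the base, and then invert one further element of $S(\vec{f})$ to make the module free. The guiding principle is the standard ``generic freeness'' fact: a finitely generated module over a domain becomes free after localizing at a single nonzero element of that domain. Concretely, $K[\vec{f}]\cong K[Y_1,\ldots,Y_m]$ is an integral domain with fraction field $K(\vec{f})$, and $M:=A_{q(\vec{f})}$ is a finite $K[\vec{f}]_{q(\vec{f})}$-module. Tensoring with $K(\vec{f})$ gives the finite-dimensional $K(\vec{f})$-vector space $S(\vec{f})^{-1}A$ (this is Corollary \ref{finite}); fix a basis of it consisting of images of elements $\omega_1,\ldots,\omega_d\in A$, where $d=\dim_{K(\vec{f})}S(\vec{f})^{-1}A$.

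Next I would carry out the freeness argument explicitly. The $\omega_i$ generate $S(\vec{f})^{-1}A$ over $K(\vec{f})$; since $M=A_{q(\vec f)}$ is finitely generated, each of its finitely many generators is a $K(\vec{f})$-combination of the $\omega_i$ with denominators in $S(\vec{f})$, so after multiplying $q$ by the product of these denominators we may assume $\omega_1,\ldots,\omega_d$ generate $M$ as a $K[\vec{f}]_{q(\vec{f})}$-module. Likewise the $\omega_i$ are $K(\vec{f})$-linearly independent, so any relation $\sum c_i\omega_i=0$ with $c_i\in K[\vec{f}]_{q(\vec f)}$ forces all $c_i=0$ after passing to $K(\vec{f})$, hence already in $K[\vec{f}]_{q(\vec f)}$ since it is a domain. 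Therefore $\{\omega_1,\ldots,\omega_d\}$ is a free basis of $A_{p(\vec f)}$, where $p(\vec f)$ is the new (still nonzero, still in $S(\vec{f})$) polynomial obtained by absorbing the finitely many denominators into $q$. The integral extension $K[\vec{f}]_{p(\vec f)}\hookrightarrow A_{p(\vec f)}$ persists since we only localized further, and the rank equals $d=\dim_{K(\vec f)}S(\vec{f})^{-1}A$ by construction, which is the last assertion.

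The main obstacle — really the only nonroutine point — is the bookkeeping that ensures a \emph{single} element $p(\vec f)\in S(\vec f)$ simultaneously clears all denominators needed for both generation and the absence of relations among a chosen generating set; this is exactly where one uses that $A$ is a finitely generated $K[\vec{X}]$-algebra (hence $M$ is finitely generated) and that $K[\vec{f}]$ is Noetherian, so only finitely many denominators ever appear. One should also check the compatibility of the various localizations, i.e.\ that $A_{p(\vec f)}=(A_{q(\vec f)})_{p(\vec f)/q(\vec f)}$ and that $K[\vec f]_{p(\vec f)}$ is again a localization of a polynomial ring and hence a domain; these are immediate from the transitivity of localization. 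Everything else is the classical argument that a finite torsion-free module over a one-element localization of a polynomial ring is free, which in this setting is a short direct verification rather than an appeal to any deep structure theory.
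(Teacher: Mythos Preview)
Your proposal is correct and follows essentially the same route as the paper: start from the finite integral extension $K[\vec{f}]_{q(\vec{f})}\hookrightarrow A_{q(\vec{f})}$ of Proposition~\ref{equality1:prop}, choose a $K(\vec{f})$-basis $\beta$ of $S(\vec{f})^{-1}A$ lying in $A$, write a finite generating set of $A_{q(\vec{f})}$ in terms of $\beta$, and absorb the finitely many denominators into a single $h(\vec{f})$ so that $p:=qh$ works. The only cosmetic difference is that you spell out the linear-independence step (which is immediate from injectivity of $K[\vec{f}]_{p(\vec{f})}\hookrightarrow K(\vec{f})$) while the paper takes it for granted; note in passing that no extra denominators are needed for that step, so your remark about clearing denominators ``for the absence of relations'' is harmless but superfluous.
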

\begin{proof} Let $q\in K[Y_1,\ldots, Y_m]$ be the non-zero polynomial of Proposition \ref{equality1:prop}. Then, we have
\begin{itemize}
\item a finite subset $\beta$ of $A$, such that $\beta$ is a basis of $S(\vec{f})^{-1}A$ as $K(\vec{f})-$vector space.
\item a finite subset $\scrM$ of $A$, such that $\scrM$ is a system of generators of the
$K[\vec{f}]_{q(\vec{f})}-$module $A_{q(\vec{f})}$.
\end{itemize}
As $\beta$ is a basis of $S(\vec{f})^{-1}A$ as $K(\vec{f})-$vector space, there is some nonzero polynomial $h\in K[Y_1,\ldots,Y_m]$ such that $h(\vec{f}):=h(f_1,\ldots, f_m)\not=0$ in $A$ and such that all elements in
$\scrM$ are linear combinations of the elements in the basis $\beta$ with coefficients in $K[\vec{f}]_{h(\vec{f})}$. Then, $p:=qh\in K[Y_1,\ldots, Y_m]$ is a non-zero polynomial such that
$$K[\vec{f}]_{p(\vec{f})}\hookrightarrow A_{p(\vec{f})}$$
is an integral ring extension and such that $A_{p(\vec{f})}$ is a free $K[\vec{f}]_{p(\vec{f})}-$module of finite rank. As rank is stable under localizations, we conclude
$$\textrm{rank}_{K[\vec{f}]_{p(\vec{f})}}(A_{p(\vec{f})})= \dim_{K(\vec{f})}S^{-1}A.$$
\end{proof}

\begin{lemma} \label{inequality}
Suppose that $\vec{f}$ is a regular sequence. Then the inequality
$\dim_K A/(\vec{f})\le \dim_{K(\vec{f})} S(\vec{f})^{-1}A
$ holds.
\end{lemma}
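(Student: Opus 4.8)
The plan is to argue by induction on the dimension $m$ of $\frak{a}$, removing $f_1$ at each step and invoking the regular‑sequence hypothesis exactly once per step, namely at the point where one needs that specializing $f_1$ to a particular value does not enlarge the generic fibre.

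The case $m=0$ is trivial: then $\vec{f}$ is empty, $K[\vec{f}]=K(\vec{f})=K$, and both sides of the asserted inequality equal $\dim_K A$. Suppose now $m\ge 1$. Since $\vec{f}=f_1,\dots,f_m$ is a regular sequence with respect to $\frak{a}$, the polynomial $f_1$ is a non‑zero‑divisor modulo $\frak{a}$, hence lies in no minimal prime of $\frak{a}$; by Krull's Principal Ideal Theorem and the catenariness of $K[\vec{X}]$ every minimal prime of $\frak{a}_1:=\frak{a}+(f_1)$ then has dimension $m-1$, so $\frak{a}_1$ is again equidimensional, of dimension $m-1$, and $f_2,\dots,f_m$ is a regular sequence with respect to $\frak{a}_1$ (the non‑zero‑divisor conditions and the properness of $\frak{a}_1+(f_2,\dots,f_m)=\frak{a}+(f_1,\dots,f_m)$ being read off directly from the definition). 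Writing $A_1:=K[\vec{X}]/\frak{a}_1=A/(f_1)$ and $\vec{f}':=f_2,\dots,f_m$, the inductive hypothesis yields $\dim_K\big(A_1/(\vec{f}')\big)\le\dim_{K(\vec{f}')}S(\vec{f}')^{-1}A_1$. Since $A_1/(\vec{f}')=A/(\vec{f})$, it remains to establish the single inequality
\[
\dim_{K(\vec{f}')}S(\vec{f}')^{-1}A_1\ \le\ \dim_{K(\vec{f})}S(\vec{f})^{-1}A ,
\]
which I will refer to as $(\star)$.

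To prove $(\star)$ I would localize in the single variable $Y_1$. Write $R:=K[\vec{f}]=K[Y_1,\dots,Y_m]$ (a polynomial ring, by Corollary \ref{algebraically-independent:corol}), so that multiplication by $Y_1$ on $A$ is multiplication by $f_1$, an injective endomorphism because $f_1$ is regular modulo $\frak{a}$. Set $\kappa:=K(\vec{f}')$ and $B:=S(\vec{f}')^{-1}A$. Directly from the definitions one verifies: $\kappa[Y_1]=S(\vec{f}')^{-1}R$ is a subring of $B$, and is a principal ideal domain with fraction field $K(\vec{f})$; $B\otimes_{\kappa[Y_1]}K(\vec{f})\cong S(\vec{f})^{-1}A$, a $K(\vec{f})$‑vector space of dimension $N:=\dim_{K(\vec{f})}S(\vec{f})^{-1}A$; and $B/Y_1 B=B/f_1 B\cong S(\vec{f}')^{-1}A_1$, a $\kappa$‑vector space of dimension $\dim_{K(\vec{f}')}S(\vec{f}')^{-1}A_1$ (this last identification uses Corollary \ref{algebraically-independent:corol} applied to $\frak{a}_1$ in order to pin down the multiplicative set defining $S(\vec{f}')^{-1}A_1$). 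Since $f_1$ is a non‑zero‑divisor in $A$ and localization is exact, $Y_1$ is a non‑zero‑divisor on $B$, so $B$ is torsion‑free over the principal ideal domain $\kappa[Y_1]$ and therefore embeds into $B\otimes_{\kappa[Y_1]}K(\vec{f})\cong S(\vec{f})^{-1}A$. The argument now finishes by elementary linear algebra: any $N+1$ elements $b_0,\dots,b_N\in B$ become $K(\vec{f})$‑linearly dependent in the $N$‑dimensional space $S(\vec{f})^{-1}A$; clearing denominators and cancelling the greatest common divisor of the coefficients — legitimate because $B$ is torsion‑free over $\kappa[Y_1]$ — one gets $g_0,\dots,g_N\in\kappa[Y_1]$, not all divisible by $Y_1$, with $\sum_i g_i b_i=0$ already in $B$; reducing this relation modulo $Y_1 B$ exhibits the images of $b_0,\dots,b_N$ in $B/Y_1 B$ as $\kappa$‑linearly dependent. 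Hence $\dim_\kappa(B/Y_1 B)\le N$, which is precisely $(\star)$, completing the induction.

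The crux — and the only genuinely non‑formal point — is $(\star)$, and the difficulty there is conceptual: the inequality runs against the usual upper semicontinuity of fibre dimension (a special fibre is, in general, at least as large as the generic one), so the regular‑sequence hypothesis must be doing real work. It is used only through the fact that $f_1$ is a non‑zero‑divisor modulo $\frak{a}$: this is exactly what makes $B$ torsion‑free over $\kappa[Y_1]$, and torsion‑freeness over a principal ideal domain is what forces, via the few lines of linear algebra above, the generic fibre of $B$ over $\kappa[Y_1]$ to be no smaller than its reduction modulo $Y_1$. A subsidiary point that should not be skipped is the assertion that $\frak{a}_1$ is again equidimensional, since otherwise Corollary \ref{algebraically-independent:corol}, and with it the very meaning of $S(\vec{f}')^{-1}A_1$ and $K(\vec{f}')$ used by the induction, would not be available.
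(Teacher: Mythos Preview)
Your proof is correct and follows essentially the same route as the paper's: induction on $m$, reduction to the inequality $(\star)$ comparing $S(\vec{f}')^{-1}A_1$ with $S(\vec{f})^{-1}A$, and then a linear‑algebra argument in which one clears denominators, cancels a common factor using that $f_1$ is a non‑zero‑divisor, and reduces modulo $f_1$. The only cosmetic differences are that the paper clears denominators all the way down to $K[\vec{f}]$ (using that $f_1$ is an irreducible element of this UFD) while you clear only to $\kappa[Y_1]$ and invoke that this is a PID, and that the paper lifts a basis of $S(\vec{f}')^{-1}A_1$ and shows it stays independent in $S(\vec{f})^{-1}A$, whereas you push $N+1$ elements the other way; these are the same argument read in opposite directions.
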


\begin{proof}
By induction on the Krull dimension $m$ of $A$.

If $m=0$ there is nothing to prove. Suppose $m\ge 1$ and denote by $B:=A/(f_1)$. Since $\vec{f}$ is a secant sequence, the ring $B$ has Krull dimension $m-1$ and $\vec{f'}:=f_2,\ldots,f_m$ is a regular sequence for the ideal ${\frak a}+(f_1)$. Therefore, if $T$ denotes the multiplicative subset $K[\vec{f'}]\setminus \{0\}$ of $B$ and $L':=K(\vec{f'})$ we have by induction hypothesis the inequality $\dim_K B/(\vec{f'})\le \dim_{L'} T^{-1}B$.

Since $B/(\vec{f'})=A/(\vec{f})$ it suffices to show the inequality $\dim_{L'} T^{-1}B\le \dim_{K(\vec{f})} S^{-1}A$.

Let $g_1,\ldots,g_s$ be polynomials in $K[\vec{X}]$ whose classes form a basis of $T^{-1}B$ over $L'$. We consider the images of $g_1,\ldots,g_s$ in the ring $S(\vec{f})^{-1}A$. We show that these elements are linearly independent over $K(\vec{f})$. Suppose on the contrary that they are $K(\vec{f})-$linearly dependent. Cleaning denominators we may suppose that there exists a non trivial linear combination $\sum_{i=1}^s p_i g_i$ which belongs to the ideal ${\frak a}$ where the $p_i$'s are polynomials in $K[\vec{f}]$. Since the polynomials $f_1,\ldots,f_m$ are algebraically independent modulo $\frak{a}$, $f_1\in K[\vec{f}]$ may be viewed as an irreducible element in a factorial domain. Moreover, since $f_1$ is not zero-divisor modulo ${\frak a}$ we may suppose without loss of generality that $f_1$ is not a common factor of all polynomials $p_1,\ldots,p_s$. Taking the class of each polynomial $p_i$ modulo the ideal $(f_1)\subset K[\vec{f}]$ we conclude that the polynomials $g_i$ are not linearly independent in $T^{-1}B$, contradiction.
\end{proof}\\

Lemma \ref{inequality} fails for secant sequences:

\begin{example} \label{cex}
{\rm Consider the one dimensional ideal ${\frak a}=(X_1^3,X_1^2X_2)\subset K[X_1,X_2]$ whose primary decomposition is ${\frak a}=(X_1^2)\cap (X_1^3,X_2)$ . The polynomial $X_2$ is a secant sequence for ${\frak a}$ and $A/(X_2)=K[X_1,X_2]/(X_1^3,X_2)\simeq K[X_1]/(X_1^3)$. Therefore $\dim_K A/(X_2)=3$.}

\noindent \rm{On the other hand, if $S:=K[X_2]\setminus \{0\}$ and $L:=K(X_2)$, we have \[S^{-1}A\simeq S^{-1}K[X_1,X_2]/(X_1^3,X_1^2)\simeq K(X_2)[X_1]/(X_1^2)\] (since $X_2$ is invertible in $S^{-1}A$). But then $\dim_{L}S^{-1}A=2$.}

\noindent \rm{Observe that $X_2$ is a secant family for ${\frak a}$ but it is a zero divisor in $A$ since $X_2 X_1^2\in {\frak a}$ and $X_1^2\notin {\frak a}$.}
\end{example}

Typically we have equality in Lemma \ref{inequality}.

\begin{proposition} \label{equality}
Let $\vec{f}$ be a secant sequence for $\frak{a}$, as before. Then there exists a non-empty Zariski open subset $\mathcal O$ of $\A^m$ such that for any $\vec{a}\in\mathcal O$ the equality $\dim_K A/(\vec{f}-\vec{a})=\dim_{K(\vec{f})}S(\vec{f})^{-1}A$  holds.
\end{proposition}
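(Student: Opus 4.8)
The plan is to extract the desired open set directly from Proposition~\ref{libre-multiplicativo:prop}. Let $p\in K[Y_1,\ldots,Y_m]$ be the nonzero polynomial furnished by that proposition, so that $K[\vec{f}]_{p(\vec{f})}\hookrightarrow A_{p(\vec{f})}$ is an integral ring extension, $A_{p(\vec{f})}$ is a free $K[\vec{f}]_{p(\vec{f})}$-module of finite rank, and this rank equals $D:=\dim_{K(\vec{f})}S(\vec{f})^{-1}A$. Using the identification $K[\vec{f}]\cong K[Y_1,\ldots,Y_m]$ of Corollary~\ref{algebraically-independent:corol}, I would set $\mathcal{O}:=\{\vec{a}\in\A^m\ :\ p(\vec{a})\neq 0\}$, which is a nonempty Zariski open subset of $\A^m$ since $p\neq 0$.

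Now fix $\vec{a}=(\alpha_1,\ldots,\alpha_m)\in\mathcal{O}$. The first step is to observe that passing to the localization at $p(\vec{f})$ is harmless for this special fiber: since $f_i-\alpha_i$ lies in the ideal $(\vec{f}-\vec{a})$, the image of $p(\vec{f})$ in $A/(\vec{f}-\vec{a})$ equals the scalar $p(\alpha_1,\ldots,\alpha_m)=p(\vec{a})\in K^{\times}$, hence is a unit. Therefore the canonical map $A/(\vec{f}-\vec{a})\to A_{p(\vec{f})}/(\vec{f}-\vec{a})A_{p(\vec{f})}$ is an isomorphism, localization commuting with the formation of the quotient.

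The second step is a base-change computation. Under the above identification, $(\vec{f}-\vec{a})K[\vec{f}]$ is the maximal ideal $\mathfrak{m}_{\vec{a}}$ of $K[\vec{f}]$ corresponding to the point $\vec{a}$, and since $p(\vec{a})\neq 0$ this maximal ideal survives in $K[\vec{f}]_{p(\vec{f})}$, with residue field $K[\vec{f}]_{p(\vec{f})}/\mathfrak{m}_{\vec{a}}K[\vec{f}]_{p(\vec{f})}\cong K$. As $A_{p(\vec{f})}$ is a free $K[\vec{f}]_{p(\vec{f})}$-module of rank $D$, reduction modulo $\mathfrak{m}_{\vec{a}}$ gives
\[
A_{p(\vec{f})}/(\vec{f}-\vec{a})A_{p(\vec{f})}\ \cong\ A_{p(\vec{f})}\otimes_{K[\vec{f}]_{p(\vec{f})}}K\ \cong\ K^{D}
\]
as $K$-vector spaces. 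Combining this with the first step yields $\dim_K A/(\vec{f}-\vec{a})=D=\dim_{K(\vec{f})}S(\vec{f})^{-1}A$, which is the claim.

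As for difficulty, the substantive content has already been established in Proposition~\ref{libre-multiplicativo:prop}; the only points that require a little care here are the verification that $p(\vec{f})$ becomes a unit in the special fiber (so that localizing at it does not alter the quotient) and the bookkeeping identifying the specialization $\vec{f}\mapsto\vec{a}$ with reduction of the free module $A_{p(\vec{f})}$ modulo $\mathfrak{m}_{\vec{a}}$. One may also note that $\mathcal{O}$ can, if desired, be intersected with the open set of Corollary~\ref{density:corol} so that moreover $\vec{f}-\vec{a}$ is a secant sequence for $\frak{a}$, although this refinement is not needed for the statement.
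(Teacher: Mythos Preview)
Your proof is correct and follows essentially the same route as the paper's own proof: both extract the polynomial $p$ from Proposition~\ref{libre-multiplicativo:prop}, pass to the maximal ideal $\mathfrak{m}_{\vec{a}}$ with $p(\vec{a})\neq 0$, and reduce the free module $A_{p(\vec{f})}\cong K[\vec{f}]_{p(\vec{f})}^{\,D}$ modulo $\mathfrak{m}_{\vec{a}}$ to obtain $K^{D}$. You make the step $A/(\vec{f}-\vec{a})\cong A_{p(\vec{f})}/(\vec{f}-\vec{a})A_{p(\vec{f})}$ explicit, which the paper leaves implicit, and you correctly note that the paper's additional intersection with the open set $U$ of Corollary~\ref{density:corol} is not needed for the stated conclusion.
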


\begin{proof}
By  Proposition \ref{libre-multiplicativo:prop}, there exists a non-zero polynomial $p\in K[Y_1,\ldots, Y_m]$ such that
the localizations by $p(\vec{f})$
define an integral ring extension
$$K[\vec{f}]_{p(\vec{f})}\hookrightarrow A_{p(\vec{f})}$$
and $A_{p(\vec{f})}$ is a free $K[\vec{f}]_{p(\vec{f})}-$module of finite rank
$$N:=\textrm{rank}_{K[\vec{f}]_{p(\vec{f})}}(A_{p(\vec{f})})= \dim_{K(\vec{f})}S(\vec{f})^{-1}A.$$

Let $H:=\{p\ne 0\}\subset \A^m$ and let $U$ be the nonempty Zariski open subset of $\A^m$ of Corollary \ref{density:corol}. Then, $\mathcal{O}:=H\cap U$ is also a non-empty Zariski open subset of $\A^m$.  Let $\vec{a}\in \mathcal{O}$ be a point in this open set and let us denote by ${\frak m}_{\vec{a}}$ the ideal in $K[\vec{f}]$ generated by
the sequence $f_1-a_1,\ldots, f_m-a_m$ in
$K[\vec{f}]$.  As $K[\vec{f}]$ is a polynomial ring, ${\frak m}_{\vec{a}}$ is a maximal ideal in $K[\vec{f}]$. Let us denote by ${\frak m}_{\vec{a}}A$ its extension to $A$. Moreover, as $p(\vec{a})\not=0$, we have $p(\vec{f})\not\in {\frak m}_{\vec{a}}$. In particular,
the extension ${\frak m}_{\vec{a}}^{(1)}:={\frak m}_{\vec{a}}K[\vec{f}]_{p(\vec{f})}$ is also a maximal ideal in $K[\vec{f}]_{p(\vec{f})}$.
Let us consider the submodule ${\frak m}_{\vec{a}}^{(1)}A_{p(\vec{f})}$ of $A_{p(\vec{f})}$ as  $K[\vec{f}]_{p(\vec{f})}-$module. Note that $(A/{\frak m}_{\vec{a}}A)_{p(\vec{f})}$ is isomorphic
as $K[\vec{f}]_{p(\vec{f})}-$module to
$$\left(K[\vec{f}]_{p(\vec{f})}\right)^N/{\frak m}_{\vec{a}}^{(1)}\left(K[\vec{f}]_{p(\vec{f})}\right)^N.$$

 Thus, we conclude that as
$K[\vec{f}]_{p(\vec{f})}-$module, $(A/{\frak m}_{\vec{a}} A)_{p(\vec{f})}$ is isomorphic to
$$\left(K[\vec{f}]_{p(\vec{f})}/ {\frak m}_{\vec{a}}^{(1)}\right)^N\cong \left(K[\vec{f}]/ {\frak m}_{\vec{a}}\right)^N\cong K^N.$$
Hence $\dim_K A/(\vec{f}-\vec{a})=N=\dim_{K(\vec{f})}S(\vec{f})^{-1}A$ for all $\vec{a}\in \mathcal{O}$.
\end{proof}

\begin{corollary} \label{dense_local}
There exists a Zariski dense subset $\mathcal{U}$ of $(\A^{n+1})^m$ such that for any sequence $\vec{f}=f_1,\ldots,f_m$ of degree one polynomials of $\mathcal{U}$ the sequence $\vec{f}$ is secant and the equality $\dim_{K(\vec{f})}S(\vec{f})^{-1}A=\dim_K A/(\vec{f})$ holds.
\end{corollary}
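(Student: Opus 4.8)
The plan is to combine the two density statements already established in the previous sections, namely Proposition \ref{density:corol2} (density of degree-one regular sequences with respect to $\frak{a}$) and Proposition \ref{equality} (generic equality of the two dimensions along a secant sequence), feeding both through the constructibility/density machinery of Lemma \ref{density:lemmaux}. First I would invoke Proposition \ref{density:corol2} to obtain a Zariski dense subset $U_0$ of $(\A^{n+1})^m$ whose points parametrize degree-one regular sequences $\vec{f}$ with respect to $\frak{a}$; since any regular sequence of maximal length is secant (as remarked in Section \ref{secant-equences:sec}), every such $\vec{f}$ is in particular a secant sequence. The point is then to carve out of $U_0$ the sub-locus where additionally $\dim_K A/(\vec{f})=\dim_{K(\vec{f})}S(\vec{f})^{-1}A$.

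The subtlety is that Proposition \ref{equality} does not directly produce this equality at a \emph{regular} sequence: it produces, for a \emph{fixed} secant sequence $\vec{f}$, a nonempty Zariski open $\mathcal{O}_{\vec{f}}\subset\A^m$ of parameters $\vec{a}$ for which $\dim_K A/(\vec{f}-\vec{a})$ equals $\dim_{K(\vec{f})}S(\vec{f})^{-1}A$. Here one must observe that $\dim_{K(\vec{f})}S(\vec{f})^{-1}A$ depends only on the $K$-algebra $K[\vec{f}]\subset A$, hence is unchanged under the translation $\vec{f}\mapsto\vec{f}-\vec{a}$ (the subalgebra generated by $f_1-a_1,\dots,f_m-a_m$ coincides with $K[\vec{f}]$, and $0\in\mathcal{O}_{\vec{f}}$-translated data is a secant sequence by Corollary \ref{density:corol}). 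Therefore, writing $\vec{g}:=\vec{f}-\vec{a}$, the equality of Proposition \ref{equality} reads precisely $\dim_K A/(\vec{g})=\dim_{K(\vec{g})}S(\vec{g})^{-1}A$ for $\vec{g}$ ranging over a nonempty open, hence dense, subset of the affine translates of a given secant $\vec{f}$.

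To turn this into a statement about a dense subset of the full parameter space $(\A^{n+1})^m$, I would set up the incidence set
\[
\mathcal{U}:=\big\{\vec{f}=(f_1,\dots,f_m)\in(\A^{n+1})^m \ :\ \vec{f}\ \text{is secant for }\frak{a}\ \text{and}\ \dim_K A/(\vec{f})=\dim_{K(\vec{f})}S(\vec{f})^{-1}A\big\}.
\]
This set is constructible: being secant is a first-order (hence constructible) condition as noted in the proof of Corollary \ref{secant_density}, and the two integers $\dim_K A/(\vec{f})$ and $\dim_{K(\vec{f})}S(\vec{f})^{-1}A$ are constructible functions of $\vec{f}$ by standard semicontinuity/generic-flatness arguments. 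It then suffices to prove $\mathcal{U}$ is Zariski dense, and for this I would apply Lemma \ref{density:lemmaux} with $s=1$, $n_1=(n+1)m$: one needs only that $\mathcal{U}$ has dense image under the identity projection, i.e. that $\mathcal{U}$ itself is dense. Concretely, start from a degree-one secant (indeed regular) sequence $\vec{f}^{\,0}$ furnished by Proposition \ref{density:corol2}; by Proposition \ref{equality} the affine translates $\vec{f}^{\,0}-\vec{a}$ lie in $\mathcal{U}$ for all $\vec{a}$ in a nonempty open subset of $\A^m$, and these translates are again degree-one sequences. Thus $\mathcal{U}$ meets every neighborhood of the dense set $U_0$, so $\overline{\mathcal{U}}\supseteq\overline{U_0}=(\A^{n+1})^m$, giving density; alternatively one packages this directly via Lemma \ref{density:lemmaux} by viewing the coefficient vector of $\vec{f}^{\,0}$ and the translation vector $\vec{a}$ as the two blocks.

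The main obstacle is the bookkeeping in the previous paragraph: making precise that $\dim_{K(\vec{f})}S(\vec{f})^{-1}A$ is translation-invariant and constructible in $\vec{f}$, and that Proposition \ref{equality} may be applied uniformly over a dense family of base sequences so that the union of the resulting open translate-loci is still dense in $(\A^{n+1})^m$. Once these two points are checked, the conclusion follows immediately from Lemma \ref{density:lemmaux} together with the constructibility of $\mathcal{U}$, exactly in the style of the proof of Corollary \ref{secant_density}.
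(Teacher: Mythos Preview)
Your proposal is correct and follows the same route as the paper, whose proof is the single line ``Combine Corollary \ref{secant_density} with Proposition \ref{equality} and Lemma \ref{density:lemmaux}.'' The only cosmetic differences are that you invoke Proposition \ref{density:corol2} (regular sequences) where the paper uses Corollary \ref{secant_density} (secant sequences), and your aside about applying Lemma \ref{density:lemmaux} with $s=1$ should be discarded in favor of the two-block (linear part / constant term) packaging you describe at the end, which is the intended application.
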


\begin{proof}
Combine Corollary \ref{secant_density} with Proposition \ref{equality} and Lemma \ref{density:lemmaux}.
\end{proof}\\

Let $1\le q\le r$ be the number of isolated primes of the ideal $\frak{a}$. Without loss of generality we may assume that these are $\frak{p}_1,\ldots,\frak{p}_q$. Then for each $1\le j\le q$ the ring $A_{\frak{p}_j}$ is local and Artinian with maximal ideal $(\frak{p}_j/\frak{a})_{\frak{p}_j}$. In the sequel we denote by $\ell_j$ the length of $A_{\frak{p}_j}$.

As in Lemma \ref{primarias-jugando:lema} we may assume without loss of generality that there exists an index $1\le s\le q$ such that $\vec{f}_i=\vec{f}|_{V(\frak{p}_i)}:V(\frak{p}_i)\to \A^m$ is dominant if and only if $1\le i\le s$.

We say that the ring extension $K(\vec{f})\subseteq S(\vec{f})^{-1}A$ is \emph{separable} if for every $1\le i\le s$ the field extension
$K(\vec{f})\subseteq S(\vec{f})^{-1}A/({\frak p}_i/\frak{a})$ is separable (compare Corollary \ref{finite}).

With these notions and notations we may formulate the following result.

\begin{proposition} \label{prop_bezout1}\
\begin{enumerate}
\item[i)] Assume that the ring extension $K(\vec{f})\subseteq S(\vec{f})^{-1}A$ is separable. Then we have
\[\dim_{K(\vec{f})}S(\vec{f})^{-1} A\leq \left(\sum_{i=1}^s \deg V({\frak p}_i)\ \ell_i\right)\prod_{k=1}^m \deg(f_k)\le\]
\[\le\left(\sum_{j=1}^q \deg V({\frak p}_j)\   \ell_j\right)\prod_{k=1}^m \deg(f_k).\]

\item[ii)] There exists a nonempty Zariski open subset $\mathcal{O}$ of $(\A^{n+1})^m$ such that for any sequence $\vec{f}=f_1,\ldots,f_m$ of degree one polynomials of $\mathcal{O}$, the sequence $\vec{f}$ is secant for $\frak{a}$ and such that
\[\dim_{K(\vec{f})}S(\vec{f})^{-1} A=\sum_{j=1}^q \deg V(\frak{p}_j)\ \ell_j\]
holds.
\end{enumerate}
\end{proposition}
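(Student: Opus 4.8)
The plan is to reduce both statements to the well-understood case of a single prime, where the degree of the affine variety governs the number of points in a generic linear fiber, and then to assemble the contributions of the isolated primes via the multiplicities $\ell_i$.

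First I would set up the key comparison: by Proposition \ref{libre-multiplicativo:prop} the localization $S(\vec f)^{-1}A$ has the same dimension over $K(\vec f)$ as the generic rank of $A_{p(\vec f)}$ over $K[\vec f]_{p(\vec f)}$, so it suffices to count, with multiplicity, the points in a generic fiber of $\vec f: V(\frak a)\to \A^m$. Since $q(\vec f)$ kills all the non-dominant primary components $\frak q_i$ with $i>s$ (Lemma \ref{primarias-jugando:lema} i) and the proof of Proposition \ref{equality1:prop}), only the components $V(\frak p_1),\dots,V(\frak p_s)$ contribute to $S(\vec f)^{-1}A$. For each such $i$, the restriction $\vec f_i:V(\frak p_i)\to\A^m$ is a dominant finite morphism of affine varieties of dimension $m$, so over a generic point of $\A^m$ its fiber has cardinality equal to the degree of the field extension $K(\vec f)\subseteq K(V(\frak p_i))$, and this degree is bounded by $\deg V(\frak p_i)\prod_{k=1}^m\deg f_k$ by the intrinsic affine B\'ezout Inequality recalled in Section 1 (the generic fiber of $\vec f_i$ is $V(\frak p_i)\cap\{f_1=c_1\}\cap\dots\cap\{f_m=c_m\}$, a transverse linear-type section in suitable coordinates). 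Multiplying the $i$-th such contribution by the length $\ell_i$ of the local ring $A_{\frak p_i}$ accounts for the primary component $\frak q_i$ thickening $\frak p_i$: concretely, after localizing at $q(\vec f)$ and then at the generic point $\eta$ of $\A^m$, the ring $S(\vec f)^{-1}A$ decomposes (it is Artinian over the field $K(\vec f)$) as a product over $i=1,\dots,s$ of local rings whose residue field is $K(V(\frak p_i))$ and whose length over that residue field is $\ell_i$; taking $\dim_{K(\vec f)}$ then gives $\sum_{i=1}^s [K(V(\frak p_i)):K(\vec f)]\cdot\ell_i$, and the stated inequality i) follows, with the second inequality trivial since $\deg V(\frak p_j)\ell_j\ge 0$ for the remaining $j$. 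The separability hypothesis is exactly what guarantees that the length of the localized component equals $\ell_i$ times the field degree rather than something smaller, and that the field degree is the honest number of geometric points in the generic fiber.

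For part ii), I would specialize to $\deg f_k=1$ and invoke Corollary \ref{secant_density} and Proposition \ref{density:corol2} to get a Zariski dense (hence, being constructible, containing a Zariski open) set of linear $\vec f$ that are secant for $\frak a$. For a generic \emph{linear} $\vec f$ two things improve: the restriction $\vec f_i:V(\frak p_i)\to\A^m$ becomes a Noether normalization of $V(\frak p_i)$ in the sense of \cite{Heintz83}, so that \emph{every} isolated prime is dominant, i.e. $s=q$; and the generic linear fiber of $\vec f$ meets $V(\frak p_i)$ in exactly $\deg V(\frak p_i)$ distinct points, each counted with the local multiplicity $\ell_i$ of $A$ along $\frak p_i$. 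Combining Corollary \ref{dense_local} (which over a dense set identifies $\dim_{K(\vec f)}S(\vec f)^{-1}A$ with $\dim_K A/(\vec f)$) with the primary decomposition of $\frak a$ localized at each $\frak p_i$, and using Lemma \ref{density:lemmaux} to intersect finitely many dense constructible conditions, I get that for $\vec f$ in a nonempty Zariski open set the equality $\dim_{K(\vec f)}S(\vec f)^{-1}A=\sum_{j=1}^q\deg V(\frak p_j)\ell_j$ holds.

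The main obstacle I expect is the bookkeeping of multiplicities: proving cleanly that the length of the $i$-th local factor of the Artinian algebra $S(\vec f)^{-1}A$ equals $\ell_i\cdot[K(V(\frak p_i)):K(\vec f)]$. This requires care in passing from the primary decomposition $\frak a=\frak q_1\cap\dots\cap\frak q_r$ to the localization at $q(\vec f)$ and then at the generic point $\eta$ of $\A^m$, checking that localization is exact and flat so lengths add up over the $s$ surviving components, and using the separability assumption to rule out any inseparable collapse of the residue field extension. The geometric counting in part ii) — that a generic linear section hits each component in $\deg V(\frak p_i)$ reduced points — is standard (it is essentially the defining property of $\deg$ together with Bertini-type genericity), so the analytic heart is really the algebraic length computation.
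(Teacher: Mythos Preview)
Your approach matches the paper's: decompose the Artinian $K(\vec f)$-algebra $S(\vec f)^{-1}A$ into local pieces isomorphic to $A_{\frak p_i}$ for $i\le s$, obtain $\dim_{K(\vec f)}S(\vec f)^{-1}A=\sum_{i=1}^s\ell_i\,[K(V(\frak p_i)):K(\vec f)]$, and bound each field degree by the affine B\'ezout inequality; for ii) a generic linear $\vec f$ forces $s=q$ and (via \cite[Lemma~1]{Heintz83}) each $[K(V(\frak p_i)):K(\vec f)]=\deg V(\frak p_i)$, so the paper reuses the formula from i) directly rather than detouring through Corollary~\ref{dense_local} and $\dim_K A/(\vec f)$. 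One clarification on the role of separability: the length identity $\dim_L R_{\frak m_i}=\ell_i\cdot[R/\frak m_i:L]$ holds for any Artinian local $L$-algebra unconditionally; separability is needed only to ensure $\vec f_i$ is generically unramified, so that the field degree equals the cardinality of a generic fiber (via \cite[Ch.~II, \S6.3, Thm.~4]{Shafarevich}) and the B\'ezout bound applies to it.
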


\begin{proof}
We are going to show statement $i)$. Let us abbreviate $L:=K(\vec{f})$, $S:=S(\vec{f})$ and $R:=S(\vec{f})^{-1}A=S^{-1}A$.

By virtue of Lemma \ref{primarias-jugando:lema} $i)$ the localizations $\frak{m}_i:=S^{-1}(\frak{p}_i/\frak{a})$, $1\le i\le s$, are exactly the maximal ideals of $R$. By Corollary \ref{finite} the $L-$algebra $R$ is finite dimensional and therefore Artinian. From the Chinese Remainder Theorem we deduce now $\displaystyle{R=\bigoplus_{i=1}^s R_{\frak{m}_i}}$. Let $1\le i\le s$ and observe that the rings $R_{\frak{m}_i}$ and $A_{\frak{p}_i}$ are isomorphic.

Therefore $\ell_i$ is also the length of the Artinian local ring $R_{\frak{m}_i}$. This implies $\dim_L R_{{m}_i}=[R/\frak{m}_i:L]\,\ell_i$.

Putting all this together we obtain
\begin{equation} \label{putting}
\dim_L S^{-1}A=\dim_L R=\sum_{i=1}^s [R/\frak{m}_i:L]\,\ell_i.
\end{equation}

Fix again $1\le i\le s$. We are now going to analyze in geometric terms the quantity  $[R/\frak{m}_i:L]$.

Observe that the dominating morphism $\vec{f}_i=\vec{f}|_{V(\frak{p}_i)}:V(\frak{p}_i)\to \A^m$ induces canonical field isomorphisms
\[
R/\frak{m}_i \cong S^{-1}(A/(\frak{p_i}/\frak{a})) \cong K(V(\frak{p}_i)),
\]
where $K(V(\frak{p}_i))$ denotes the fraction field of $V(\frak{p}_i)$. Since by assumption the field extension $K(\vec{f})\subset S^{-1}(A/(\frak{p_i}/\frak{a}))$ is separable, we see that $\vec{f}_i$ is generically unramified. In particular, we are in conditions to apply \cite[Theorem 4, Chapter II, \S6.3]{Shafarevich} to conclude that there exists a Zariski open subset $W$ of $\A^m$ such that for any point $\vec{a}\in W$ the fiber $\vec{f}_i^{-1}(\vec{a})$ is unramified and of cardinality $[R/\frak{m}_i:L]$.

Choose $\vec{a}\in W$. From the B\'ezout Inequality \cite{Heintz83} we deduce now
\[
[R/\frak{m}_i:L]=\sharp \vec{f}_i^{-1}(\vec{a})\le \deg V(\frak{p}_i)\prod_{k=1}^m \deg(f_k).
\]
From (\ref{putting}) we infer finally the statement $i)$ of the proposition, namely
\[\dim_{K(\vec{f})}S(\vec{f})^{-1} A = \dim_L S^{-1}A=\sum_{i=1}^s \sharp \vec{f}_i^{-1}(\vec{a})\, \ell_i\le
\]
\[
\le \left(\sum_{i=1}^s \deg V({\frak p}_i)\ \ell_i\right)\prod_{k=1}^m \deg(f_k)\le \left(\sum_{j=1}^q \deg V({\frak p}_j)\   \ell_j\right)\prod_{k=1}^m \deg(f_k).
\]

We are now going to prove statement $ii)$ of the proposition.

Following Corollary \ref{secant_density} and \cite[Lemma 1]{Heintz83} we may choose a nonempty Zariski open subset of $(\A^{n+1})^m$ such that for any sequence $\vec{f}=f_1,\ldots,f_m$ of degree one polynomials of $\mathcal{O}$ the following conditions are satisfied:
\begin{itemize}
\item $\vec{f}$ is a secant sequence for $\frak{a}$;
\item for $1\le j\le q$ the morphisms $\vec{f}_j:=\vec{f}\mid_{V({\frak p}_j)}: V({\frak p}_j)\to \A^m$ are dominant, generically unramified and of degree $\deg V(\frak{p}_j)$.
\end{itemize}

Thus, in particular, any sequence $\vec{f}=f_1,\ldots,f_m$ of  $\mathcal{O}$ fulfills the assumption of the statement $i)$ of the proposition.

From the proof of statement $i)$ we deduce:
\[ \dim_{K(\vec{f})}S(\vec{f})^{-1}A=\sum_{j=1}^q \sharp \vec{f}_j^{-1} (\vec{a})\, \ell_j
\]
for a suitable, generically chosen point $\vec{a}\in \A^m$. Since $f_j$ is generically unramified of degree $\deg V(\frak{p}_j)$ we may choose $\vec{a}$ such that $\sharp \vec{f}_j^{-1} (\vec{a})=\deg V(\frak{p}_j)$ holds for $1\le j\le q$. This implies the statement $ii)$ of the proposition, namely
$\dim_{K(\vec{f})}S(\vec{f})^{-1}A= \sum_{j=1}^q \deg V(\frak{p}_j)\, \ell_j.$
\end{proof}

\section{Notion of the degree of an ideal of non-homogeneous polynomials}

As at the end of the last section let be given an equidimensional ideal $\frak{a}$ of $K[\vec{X}]$ of dimension $m$ with isolated primes $\frak{p}_1,\ldots,\frak{p}_q$. Let $A:=K[\vec{X}]/\frak{a}$, recall that for $1\le j\le q$ the ring $A_{\frak{p}_j}$ is local and Artinian and let $\ell_j$ be the length of $A_{\frak{p}_j}$.

We shall need the following technical result.

\begin{lemma} \label{density_dimension}
There exists a nonempty Zariski open subset $\mathcal{U}$ of $(\A^{n+1})^m$ such that for any sequence $\vec{f}=f_1,\ldots,f_m$ of degree one polynomials of $\CU$, the sequence $\vec{f}$ is secant for $\frak{a}$ and $\dim_K A/(\vec{f})$ is constant, independently from $\vec{f}$.
\end{lemma}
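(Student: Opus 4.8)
The plan is to assemble this lemma from the results already proved in Sections 2 and 3. First I would recall, from Corollary \ref{dense_local}, that there is a Zariski dense subset $\mathcal{U}_0$ of $(\A^{n+1})^m$ of degree-one sequences $\vec{f}$ which are secant for $\frak{a}$ and satisfy the equality $\dim_K A/(\vec{f})=\dim_{K(\vec{f})}S(\vec{f})^{-1}A$. The point of going through the localized ring is that its dimension over $K(\vec{f})$ is an \emph{intrinsic} quantity attached to the pair $(\frak{a},\vec{f})$; what remains is to see that for a suitably small (still nonempty Zariski open) set of $\vec{f}$, this localized dimension does not actually depend on $\vec{f}$ either. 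This is exactly where Proposition \ref{prop_bezout1}(ii) enters: it produces a nonempty Zariski open subset $\mathcal{O}$ of $(\A^{n+1})^m$ on which $\vec{f}$ is secant and $\dim_{K(\vec{f})}S(\vec{f})^{-1}A=\sum_{j=1}^q \deg V(\frak{p}_j)\,\ell_j$, a quantity manifestly independent of $\vec{f}$.

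The combination then goes as follows. Set $\mathcal{U}:=\mathcal{O}$, the Zariski open set of Proposition \ref{prop_bezout1}(ii). This set is nonempty and open; every $\vec{f}\in\mathcal{U}$ is secant for $\frak{a}$; and for every such $\vec{f}$ we have $\dim_{K(\vec{f})}S(\vec{f})^{-1}A=\sum_{j=1}^q \deg V(\frak{p}_j)\,\ell_j$, which is constant. To transfer this to $\dim_K A/(\vec{f})$ we need the equality $\dim_K A/(\vec{f})=\dim_{K(\vec{f})}S(\vec{f})^{-1}A$ to hold on $\mathcal{U}$ as well. This is guaranteed by Proposition \ref{equality}, which asserts that for a \emph{fixed} secant sequence $\vec{f}$ there is a nonempty Zariski open $\mathcal{O}_{\vec{f}}\subset\A^m$ such that the equality holds for the translated sequences $\vec{f}-\vec{a}$, $\vec{a}\in\mathcal{O}_{\vec{f}}$; invoking Corollary \ref{secant_density}, Corollary \ref{dense_local} and Lemma \ref{density:lemmaux} one upgrades this to a nonempty Zariski open subset of $(\A^{n+1})^m$ of degree-one sequences on which both properties (secance and the equality) hold. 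Intersecting that open set with $\mathcal{O}$ from Proposition \ref{prop_bezout1}(ii) yields a nonempty Zariski open $\mathcal{U}$ on which $\vec{f}$ is secant, $\dim_K A/(\vec{f})=\dim_{K(\vec{f})}S(\vec{f})^{-1}A$, and the latter equals $\sum_{j=1}^q \deg V(\frak{p}_j)\,\ell_j$; hence $\dim_K A/(\vec{f})$ is constant on $\mathcal{U}$, as claimed.

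The only genuine subtlety — the step I would treat most carefully — is verifying that the two open sets one wants to intersect are really \emph{both} of the desired form as subsets of $(\A^{n+1})^m$, i.e. that the set of degree-one $\vec{f}$ satisfying \emph{all} the required constructible conditions simultaneously is Zariski dense, so that its constructibility forces it to contain a nonempty Zariski open set. This is a bookkeeping application of Lemma \ref{density:lemmaux} exactly as in the proof of Corollary \ref{dense_local} and Corollary \ref{secant_density}: each condition (being secant, the localized-dimension equality, the value of the localized dimension) is first-order expressible hence constructible, and density is obtained fiberwise coordinate by coordinate. Once density plus constructibility is in hand, the intersection of finitely many such sets is again nonempty Zariski open, and the lemma follows.
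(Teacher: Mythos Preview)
Your approach is correct in outline but takes a genuinely different route from the paper. The paper proves Lemma~\ref{density_dimension} by a direct Gr\"obner basis argument: introducing parameters $T_{ij}$ for the coefficients of generic degree-one polynomials $F_1,\ldots,F_m$, it computes a Gr\"obner basis of $(g_1,\ldots,g_s,F_1,\ldots,F_m)$ over $K(\vec{T})[\vec{X}]$, and takes $\mathcal{U}$ to be the open locus where no numerator or denominator of a leading coefficient vanishes. On this locus the computation specializes step by step, the stair is preserved, and hence $\dim_K A/(\vec{f})$ is constant. This argument is entirely self-contained and does not use Proposition~\ref{prop_bezout1} or Corollary~\ref{dense_local} at all.

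Your route instead folds Lemma~\ref{density_dimension} and Theorem~\ref{degree_formula} into a single argument: on the open set of Proposition~\ref{prop_bezout1}(ii) the localized dimension equals the constant $C=\sum_j \deg V(\frak{p}_j)\,\ell_j$; Corollary~\ref{dense_local} produces a dense set on which $\dim_K A/(\vec{f})$ equals that localized dimension; intersecting, you get a dense set on which $\dim_K A/(\vec{f})=C$, and constructibility of the level set $\{\dim_K A/(\vec{f})=C\}$ then yields a nonempty open subset. One point to sharpen in your writeup: you should not claim that the \emph{equality} ``$\dim_K A/(\vec{f})=\dim_{K(\vec{f})}S(\vec{f})^{-1}A$'' is directly first-order in $\vec{f}$---the right-hand side is not manifestly a constructible function of $\vec{f}$. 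What makes your argument go through is that, after restricting to the open set of Proposition~\ref{prop_bezout1}(ii), this equality becomes ``$\dim_K A/(\vec{f})=C$'' for a \emph{fixed} integer $C$, and \emph{that} condition is constructible (e.g.\ by semicontinuity of fiber length, or by degree bounds on Gr\"obner bases). With that clarification your argument is complete.

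As to what each approach buys: the paper's Gr\"obner argument is independent of the localization machinery, and the same construction is reused verbatim in Section~\ref{computing} to compute $\deg(\frak{a})$ algorithmically---so it serves double duty. Your argument is more economical within Sections~2--4 (it essentially proves Theorem~\ref{degree_formula} in one stroke and identifies the constant), but it leans on Proposition~\ref{prop_bezout1}(ii) and the density bookkeeping, and it does not by itself furnish the algorithmic specialization used later.
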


\begin{proof}
Let $T_{ij}$, $1\le i\le m$, $0\le j\le n$ be new indeterminates, $\vec{T}=(T_{ij})_{{1\le i\le m}\atop{0\le j\le n}}$ and let $F_i:=\sum_{j=1}^m T_{ij}X_j\, +\, T_{i0}$, $1\le i\le m$. Fix any monomial order of $\vec{X}$ and let $g_1,\ldots g_s\in K[\vec{X}]$ be a set of generators of $\frak{a}$. Observe that the ideal $(g_1,\ldots g_s,F_1,\ldots,F_m)$ of $K(\vec{T})[\vec{X}]$ is zero-dimensional.

Consider an arbitrary Gr\"obner basis computation $\beta$ of this ideal.

The leading coefficients occurring in $\beta$ form a finite set of nonzero rational functions of $K(\vec{T})$. Hence, there exists a nonempty Zariski open subset $\CU$ of $\A^{(n+1)m}$ where none of the numerators and denominators of these rational functions vanishes.

Let $\vec{f}=f_1,\ldots,f_m$ be a sequence of degree one polynomials of $\CU$. Then $\beta$ may be specialized to a Gr\"obner basis computation of $(g_1,\ldots g_s,f_1,\ldots,f_m)$ in $K[\vec{X}]$ which yields the stair of $\beta$.

In view of Corollary \ref{secant_density} we may assume without loss of generality that for every sequence $\vec{f}=f_1,\ldots,f_m$ of degree one polynomials of $\CU$ the sequence $\vec{f}$ is secant for $\frak{a}$.

Therefore everything is well defined and $\dim_K A/(\vec{f})$ is finite and constant on $\CU$.
\end{proof}

\begin{theorem} \label{degree_formula}
There exists a nonempty Zariski open subset $\CO$ of $(\A^{n+1})^m$ such that for any sequence $\vec{f}=f_1,\ldots,f_m$ of degree one polynomials of $\CO$ the sequence $\vec{f}$ is secant for $\frak{a}$ and $\dim_K A/(\vec{f})=\sum_{j=1}^q \deg V(\frak{p}_j)\, \ell_j$ holds.
\end{theorem}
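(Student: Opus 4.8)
The plan is to combine the three preceding results—Corollary \ref{dense_local}, Proposition \ref{prop_bezout1} $ii)$, and Lemma \ref{density_dimension}—by intersecting the Zariski open (or dense) sets they provide. More precisely, I would proceed as follows. First, invoke Corollary \ref{dense_local} to obtain a Zariski dense subset $\mathcal{U}_1$ of $(\A^{n+1})^m$ such that for every sequence $\vec{f}=f_1,\ldots,f_m$ of degree one polynomials of $\mathcal{U}_1$ the sequence is secant for $\frak{a}$ and the equality $\dim_K A/(\vec{f})=\dim_{K(\vec{f})}S(\vec{f})^{-1}A$ holds. Second, invoke Proposition \ref{prop_bezout1} $ii)$ to obtain a nonempty Zariski open subset $\mathcal{O}_2$ of $(\A^{n+1})^m$ such that for every sequence $\vec{f}$ of degree one polynomials of $\mathcal{O}_2$ the sequence is secant for $\frak{a}$ and $\dim_{K(\vec{f})}S(\vec{f})^{-1}A=\sum_{j=1}^q \deg V(\frak{p}_j)\,\ell_j$. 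Chaining these two equalities gives $\dim_K A/(\vec{f})=\sum_{j=1}^q \deg V(\frak{p}_j)\,\ell_j$ on the intersection.

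The remaining point is that $\mathcal{U}_1$ is only asserted to be Zariski \emph{dense} (it is constructible, being described by first-order conditions, but Corollary \ref{dense_local} as stated does not give openness), whereas we want a nonempty Zariski \emph{open} set $\mathcal{O}$ in the conclusion. This is exactly where Lemma \ref{density_dimension} enters: it furnishes a nonempty Zariski \emph{open} subset $\mathcal{O}_3$ of $(\A^{n+1})^m$ on which $\vec{f}$ is secant and $\dim_K A/(\vec{f})$ is \emph{constant}. So I would set $\mathcal{O}:=\mathcal{O}_2\cap\mathcal{O}_3$, which is again a nonempty Zariski open set (both are nonempty open in an irreducible variety, hence their intersection is nonempty and open). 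Since $\mathcal{U}_1$ is Zariski dense it meets the nonempty open set $\mathcal{O}$, so there exists at least one sequence $\vec{f}_0$ of degree one polynomials lying in $\mathcal{U}_1\cap\mathcal{O}$. For that particular $\vec{f}_0$ the chain of equalities above yields $\dim_K A/(\vec{f}_0)=\sum_{j=1}^q \deg V(\frak{p}_j)\,\ell_j$.

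Finally, by Lemma \ref{density_dimension} the quantity $\dim_K A/(\vec{f})$ is constant on $\mathcal{O}_3\supseteq\mathcal{O}$, hence equal to its value at $\vec{f}_0$, namely $\sum_{j=1}^q \deg V(\frak{p}_j)\,\ell_j$, for \emph{every} degree one sequence $\vec{f}$ of $\mathcal{O}$. This establishes the theorem. (One should note that the identification of the coefficient vectors of degree one polynomials with points of $(\A^{n+1})^m$ is the same in all three cited statements, so the intersections make sense without any reindexing; this is a routine bookkeeping matter.)

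The only mild obstacle is the bookkeeping with ``dense'' versus ``open'': the honest content of the argument is that a constant function on a nonempty open set is determined by its value at any single point, and a dense set is guaranteed to supply such a point where we can actually compute the value via Proposition \ref{prop_bezout1}. No new geometric input is needed beyond what the three lemmas already provide, so the proof is short. I would also remark that the open set $\mathcal{O}$ can be shrunk, if desired, so as to simultaneously guarantee that the ring extension $K(\vec{f})\subseteq S(\vec{f})^{-1}A$ is separable (as in Proposition \ref{prop_bezout1} $i)$, using the genericity of $\vec{f}$), but this is not strictly required for the statement as phrased.
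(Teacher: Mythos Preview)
Your argument is correct and is essentially identical to the paper's own proof: the paper also intersects the open sets coming from Proposition \ref{prop_bezout1} $ii)$ and Lemma \ref{density_dimension}, then uses the density statement of Corollary \ref{dense_local} to produce a single $\vec{f}_0$ in that open set where $\dim_K A/(\vec{f}_0)=\dim_{K(\vec{f}_0)}S(\vec{f}_0)^{-1}A$, and finally propagates the value to all of $\mathcal{O}$ via the constancy of $\dim_K A/(\vec{f})$. Your handling of the dense-versus-open bookkeeping is, if anything, slightly more explicit than the paper's, but there is no substantive difference.
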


\begin{proof}
Combining Proposition \ref{prop_bezout1} $ii)$ with Lemma \ref{density_dimension} we find a nonempty Zariski open subset $\CO$ of  $(\A^{n+1})^m$ such that for any sequence $\vec{f}=f_1,\ldots,f_m$ of degree one polynomials of $\CO$ the sequence $\vec{f}$ is secant for $\frak{a}$ such that $\dim_K A/(\vec{f})$ is constant and  $\dim_{K(\vec{f})} S^{-1}(\vec{f})A=\sum_{j=1}^q \deg V(\frak{p}_j)\, \ell_j$ holds.

From Corollary \ref{dense_local} we conclude that there exists a sequence $\vec{f_0}=f_1^0,\ldots,f_m^0$ of degree one polynomials of $\CO$ satisfying the equality
\[
\dim_{K(\vec{f_0})} S^{-1}(\vec{f_0})A=\dim_K A/(\vec{f_0}).
\]
This implies
\[\dim_K A/(\vec{f_0})=\sum_{j=1}^q \deg V(\frak{p}_j)\, \ell_j.\]
Hence, for an arbitrary sequence $\vec{f}=f_1,\ldots,f_m$ of degree one polynomials belonging to $\CO$ the sequence $\vec{f}$ is secant for $\frak{a}$ and it holds
\[\dim_K A/(\vec{f})=\dim_K A/(\vec{f_0})=\sum_{j=1}^q \deg V(\frak{p}_j)\, \ell_j.\]
\end{proof}

We may simplify the somewhat complicated formulation  of Theorem \ref{degree_formula} saying that a generic sequence $\vec{f}=f_1,\ldots,f_m$ of degree one polynomials is secant for $\frak{a}$ and $\dim_K A/(\vec{f})=\sum_{j=1}^q \deg V(\frak{p}_j)\, \ell_j$ holds. In this sense the word generic refers always to the existence of a nonempty Zariski open set which not always is made explicit.

Using this terminology we may define the degree of the equidimensional ideal $\frak{a}$ in two different ways as follows.

\begin{definition} \label{degree-primary}
The \emph{degree $deg(\frak{a})$ of the equidimensional ideal $\frak{a}$} may be equivalently defined as
\begin{enumerate}
\item[$i)$] $\deg(\frak{a}):=\dim_K A/(\vec{f})$ for a generic sequence $\vec{f}=f_1,\ldots,f_m$ of degree one polynomials of $K[\vec{X}]$

    or
\item[$ii)$] $\deg(\frak{a}):=\sum_{j=1}^q \deg V(\frak{p}_j)\, \ell_j$, where $\frak{p}_1,\ldots,\frak{p}_q$ are the isolated primes of $\frak{a}$ and $\ell_1,\ldots,\ell_q$ are the lengths of the Artinian local rings $A_{\frak{p}_1},\ldots,A_{\frak{p}_q}$.
\end{enumerate}
The formulation $ii)$ for the degree of $\frak{a}$ was introduced in \cite{bayer} for homogeneous ideals whereas the formulation $i)$ seems new for non-homogeneous ideals and represents our ``workable" notion of degree (see Section \ref{computing}).
\end{definition}

The next statement is a straightforward consequence of this definition:

\begin{proposition} \label{subset}
Let $\frak{a}$ and $\frak{b}$ two equidimensional ideals in the ring $K[\vec{X}]$ of the same dimension $m$. If $\frak{a}\subseteq \frak{b}$ then $\deg(\frak{b})\le \deg(\frak{a})$.
\end{proposition}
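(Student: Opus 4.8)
The plan is to use characterization $i)$ of the degree, since it behaves well under inclusions. First I would choose a sequence $\vec{f}=f_1,\ldots,f_m$ of degree one polynomials of $K[\vec{X}]$ lying simultaneously in the two nonempty Zariski open subsets provided by Theorem \ref{degree_formula} applied to $\frak{a}$ and to $\frak{b}$; this is possible because the intersection of two nonempty Zariski open subsets of an irreducible space is again nonempty. For such a sequence, $\vec{f}$ is secant for both ideals and one has $\deg(\frak{a})=\dim_K A/(\vec{f})$ and $\deg(\frak{b})=\dim_K B/(\vec{f})$, where $A=K[\vec{X}]/\frak{a}$ and $B=K[\vec{X}]/\frak{b}$.

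Next I would observe that the inclusion $\frak{a}\subseteq\frak{b}$ induces a surjection $A/(\vec{f})=K[\vec{X}]/(\frak{a}+(\vec{f}))\twoheadrightarrow K[\vec{X}]/(\frak{b}+(\vec{f}))=B/(\vec{f})$ of finite-dimensional $K$-vector spaces. Hence $\dim_K B/(\vec{f})\le\dim_K A/(\vec{f})$, which is exactly the inequality $\deg(\frak{b})\le\deg(\frak{a})$.

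The only point requiring a moment of care is the compatibility of the generic choice: Theorem \ref{degree_formula} guarantees for each equidimensional ideal a nonempty open set of good sequences, but a priori these open sets could be disjoint. As noted above this cannot happen in an irreducible variety, so the argument goes through; there is no real obstacle here. One could alternatively argue directly from characterization $ii)$ by comparing isolated primes and lengths, but this is more delicate (the isolated primes of $\frak{b}$ need not be isolated primes of $\frak{a}$ in an obvious way), so the route through $i)$ is the clean one.
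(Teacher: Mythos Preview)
Your proof is correct and follows essentially the same route as the paper's own proof: choose a common generic sequence $\vec{f}$ of degree one polynomials valid for both ideals (using that two nonempty Zariski open subsets of $(\A^{n+1})^m$ meet), then use the inclusion $\frak{a}+(\vec{f})\subseteq\frak{b}+(\vec{f})$ to compare $K$-dimensions. You have simply spelled out in more detail what the paper compresses into two lines.
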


\begin{proof}
Since the ideals have the same dimension we can take the same generic degree one polynomials $\vec{f}$ for both ideals. Thus $\frak{a}+(\vec{f})\subseteq \frak{b}+(\vec{f})$ and the proposition follows.
\end{proof}\\

If the ideal $\frak{a}$ is generated by a single polynomial its degree agrees with the total degree of the polynomial which generates it.

\begin{proposition} \label{single}
Let $\frak{a}$ be the ideal generated by a non constant polynomial $g$. The $\deg(\frak{a})=\deg(g)$.
\end{proposition}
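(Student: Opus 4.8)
The plan is to compute both sides of the claimed equality $\deg(\frak{a})=\deg(g)$ using definition $ii)$ of Definition \ref{degree-primary}, i.e.\ via the isolated primes of $\frak{a}=(g)$ and the lengths of the associated Artinian local rings. Write the factorization of $g$ into irreducibles $g=c\,g_1^{e_1}\cdots g_t^{e_t}$ with $c\in K^{*}$, the $g_i$ pairwise non-associated irreducible polynomials and $e_i\ge 1$. Since $K[\vec X]$ is a UFD, the ideal $(g_i)$ is prime of dimension $n-1$ for each $i$, and the minimal (hence isolated) primes of $(g)$ are exactly $\frak{p}_i:=(g_i)$, $1\le i\le t$. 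In particular $\frak{a}=(g)$ is equidimensional of dimension $m=n-1$, so Definition \ref{degree-primary} applies.

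The two key computations are then the following. First, $\deg V(\frak{p}_i)=\deg V(g_i)=\deg(g_i)$: a hypersurface defined by an irreducible polynomial has degree equal to the degree of that polynomial (intersect with a generic line and count, using that $g_i$ restricted to a generic parametrized line is a nonzero univariate polynomial of degree $\deg(g_i)$ with $\deg(g_i)$ roots in the algebraically closed field $K$). Second, the length $\ell_i$ of the local ring $A_{\frak{p}_i}$ equals $e_i$: localizing $K[\vec X]$ at the height-one prime $\frak{p}_i=(g_i)$ yields a discrete valuation ring $\mathcal O_i$ with uniformizer $g_i$, and $A_{\frak{p}_i}=\mathcal O_i/(g)\mathcal O_i=\mathcal O_i/(g_i^{e_i})$ since the other factors $g_j$ and the unit $c$ become units in $\mathcal O_i$; a DVR modulo the $e_i$-th power of its maximal ideal has length exactly $e_i$, the composition series being given by the powers of the maximal ideal. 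Combining these, Definition \ref{degree-primary}$(ii)$ gives
\[
\deg(\frak{a})=\sum_{i=1}^{t}\deg V(\frak{p}_i)\,\ell_i=\sum_{i=1}^{t}e_i\deg(g_i)=\deg\Big(\prod_{i=1}^{t}g_i^{e_i}\Big)=\deg(g),
\]
as claimed.

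I expect the main obstacle to be the clean verification of the length statement $\ell_i=e_i$, i.e.\ making rigorous that $A_{\frak{p}_i}\cong \mathcal O_i/\frak{m}_i^{e_i}$ where $\frak{m}_i$ is the maximal ideal of the DVR $\mathcal O_i$, and that this quotient has length $e_i$. The subtle points are that $K[\vec X]_{\frak{p}_i}$ is indeed a DVR (this follows since $K[\vec X]$ is a UFD and $\frak{p}_i$ has height one, or alternatively since $K[\vec X]$ is regular hence normal and a normal Noetherian domain localized at a height-one prime is a DVR), that the non-associated irreducible factors $g_j$ ($j\ne i$) lie outside $\frak{p}_i$ hence are units after localization, and that in a DVR the chain $\mathcal O_i\supsetneq\frak{m}_i\supsetneq\frak{m}_i^{2}\supsetneq\cdots\supsetneq\frak{m}_i^{e_i}$ induces a composition series of $\mathcal O_i/\frak{m}_i^{e_i}$ of length $e_i$, each successive quotient $\frak{m}_i^{k}/\frak{m}_i^{k+1}$ being a one-dimensional $\mathcal O_i/\frak{m}_i$-vector space. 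An alternative route that sidesteps definition $ii)$ entirely would be to use definition $i)$ directly: intersect $V(g)$ with $m=n-1$ generic hyperplanes to reduce to the case $n=1$, where $\frak{a}=(g)$ with $g\in K[X_1]$ a univariate polynomial of degree $d$, and then $\dim_K K[X_1]/(g)=d=\deg(g)$ is immediate; however, justifying that the generic linear section of the hypersurface $V(g)$ carries the multiplicities correctly essentially reproves the same local computation, so I would present the argument via definition $ii)$ as above.
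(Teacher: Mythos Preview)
Your argument via Definition~\ref{degree-primary}$(ii)$ is correct: the factorization $g=c\prod g_i^{e_i}$, the identification of the isolated primes as $(g_i)$, the degree $\deg V(g_i)=\deg(g_i)$ for an irreducible hypersurface, and the length computation $\ell_i=e_i$ through the DVR $K[\vec X]_{(g_i)}$ are all sound, and they combine to give $\deg(\frak a)=\sum_i e_i\deg(g_i)=\deg(g)$.

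The paper, however, takes the route you mention at the end and then set aside: it uses Definition~\ref{degree-primary}$(i)$ directly. After a generic linear change of coordinates one may take the $n-1$ generic affine forms to be $X_2-\alpha_2,\ldots,X_n-\alpha_n$, and the observation is simply that specializing $n-1$ variables generically in $g$ yields a univariate polynomial of the same total degree $d=\deg(g)$, whence $\dim_K K[X_1]/(g(X_1,\alpha_2,\ldots,\alpha_n))=d$. Contrary to your closing remark, this does \emph{not} reprove the local length computation: no factorization of $g$, no DVR, and no composition series enter; one only needs that the highest-degree homogeneous part of $g$ does not vanish identically under a generic linear substitution. So the paper's argument is genuinely shorter. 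What your approach buys is an explicit decomposition of $\deg(\frak a)$ into the contributions $e_i\deg(g_i)$ of the irreducible factors, which is more informative structurally even if unnecessary for the bare statement.
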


\begin{proof}
It suffices to observe that after a generic linear change of coordinates, the degree of the polynomial $g$ does not change if $n-1$ variables are generically specialized and it agrees with the total degree of $g$.
\end{proof}\\

In the case of general polynomial ideals, as customary, we extend our notion of degree as follows.

\begin{definition} \label{degree-any}
Let $I\subset K[\vec{X}]$ be an arbitrary proper polynomial ideal with isolated primary components $\frak{q}_1,\ldots,\frak{q}_t$. We define:
\[
\deg(I):= \sum_{h=1}^t \deg(\frak{q}_h).
\]
\end{definition}

\begin{remark} \label{radical}
For radical polynomial ideals Definitions \ref{degree-primary} and \ref{degree-any} coincide with the usual notions of geometric degree of (equidimensional or arbitrary) algebraic closed subvarieties of affine spaces following \cite{Heintz83}.
\end{remark}

\subsection {On the B\'ezout Inequality}

In view of the B\'ezout Inequality \cite{Heintz83,fulton,vogel} for affine varieties one might expect that for arbitrary ideals $I,J\subset K[\vec{X}]$ the following estimation holds:
\[\deg(I+J)\le \deg(I).\deg(J).\]

That this may become wrong shows the following example.

\begin{example}
\rm{Consider the one-dimensional ideal $I=(X_1^3,X_1^2X_2)\subset K[X_1,X_2]$ whose primary decomposition is $I=(X_1^2)\cap (X_1^3,X_2)$. The primary isolated component is $(X_1^2)$ while $(X_1^3,X_2)$ is the embedded one. By Definition \ref{degree-primary} we have $\deg(I)=\deg((X_1^2))=2$, since for any generic linear polynomial $f_1:=aX_1+bX_2+c$ the ring $K[\vec{X}]/(X_1^2, f_1)$ has the $K$-basis $\{1,X_1\}$.}

\noindent \rm{Take the ideal $J:=(X_2)\subset K[\vec{X}]$, which has degree one, and consider the degree of the sum $I+ J$. We have \[I+ J=(X_1^3,X_1^2X_2,X_2)=(X_1^3,X_2),\]
which is a $0$-dimensional $(X_1,X_2)$-primary ideal. Clearly $\deg(I+J)=3$.}

\noindent \rm{But on the other hand we have $\deg(I).\deg(J)=2.1=2<3$.}
\end{example}

The following example illustrates that there is no chance to obtain an intrinsic B\'ezout Inequality (which depends only on the degrees of the ideals but not their generators).

\begin{example}
\rm{Let $k\in \N$ and $I=(X_1^k, X_1X_2)\subset K[X_1,X_2]$. It is easy to see that the primary decomposition of $I$ is: $I=(X_1)\cap (X_1^k,X_2)$ and then $\deg (I)=\deg((X_1))=1$. By adding the ideal $(X_2)$ we have}
\[
\deg(I+(X_2))=\deg(X_1^k,X_1X_2,X_2)=\deg(X_1^k,X_2)=\]
\[={\textrm{length}}\, (K[X_1,X_2]/(X_1^k,X_2))=k.
\]
\rm{Observe that the degree of the sum of the ideals depends on the degree of the generators of $I$ but not on the degree of $I$.}
\end{example}

Nevertheless, Proposition \ref{prop_bezout1} implies the following B\'ezout-type Inequality for equidimensional ideals.

\begin{theorem} \label{bezout_suc_regular}
Let $K$ be of characteristic zero and let $\frak{a}\subset K[\vec{X}]$ be an equidimensional ideal of dimension $m>0$. Let $f_1,\ldots,f_k$ be a regular sequence, not necessarily maximal, for $\frak{a}$. Then the inequality
\[ \deg(\frak{a}+(f_1,\ldots,f_k))\le \deg(\frak{a})\prod_{i=1}^k \deg(f_i)\]
holds.
\end{theorem}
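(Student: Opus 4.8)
The plan is to reduce the statement to an application of Proposition \ref{prop_bezout1} by extending the given regular sequence to a maximal secant sequence in a degree-controlled way. First I would note that since $\frak{a}$ is equidimensional of dimension $m$ and $f_1,\ldots,f_k$ is a regular sequence with respect to $\frak{a}$, Krull's Principal Ideal Theorem applied step by step shows that $\frak{b}:=\frak{a}+(f_1,\ldots,f_k)$ is equidimensional of dimension $m-k$. If $k=m$ the ideal $\frak{b}$ is zero-dimensional and the degree is just $\dim_K K[\vec X]/\frak{b}$; if $k<m$ we must exhibit enough structure to control $\deg(\frak{b})=\dim_K K[\vec X]/(\frak{b}+(\ell_{k+1},\ldots,\ell_m))$ for a generic choice of linear forms $\ell_{k+1},\ldots,\ell_m$.

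The key step is the following: choose generic degree-one polynomials $\ell_{k+1},\ldots,\ell_m$ so that (i) $\ell_{k+1},\ldots,\ell_m$ is a secant sequence for $\frak{b}$, computing $\deg(\frak{b})$ via Definition \ref{degree-primary}(i), and (ii) $f_1,\ldots,f_k,\ell_{k+1},\ldots,\ell_m$ is a regular sequence with respect to $\frak{a}$. Point (ii) follows from Proposition \ref{density:corol2} applied successively to the equidimensional ideals $\frak{a}+(f_1,\ldots,f_k)$, $\frak{a}+(f_1,\ldots,f_k,\ell_{k+1})$, and so on, combined with Lemma \ref{density:lemmaux} to intersect the resulting dense sets; point (i) follows from Corollary \ref{secant_density} and Definition \ref{degree-primary}(i). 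Thus $\vec{g}:=f_1,\ldots,f_k,\ell_{k+1},\ldots,\ell_m$ is simultaneously a regular sequence for $\frak{a}$ of maximal length $m$, and $\deg(\frak{b})=\dim_K K[\vec X]/(\frak{a}+(\vec g))$. Since a regular sequence of maximal length is in particular secant for $\frak{a}$, we may apply Lemma \ref{inequality} to get
\[
\dim_K K[\vec X]/(\vec g)\le \dim_{K(\vec g)} S(\vec g)^{-1}A.
\]
Now Proposition \ref{prop_bezout1}(i) bounds the right-hand side by $\bigl(\sum_{j=1}^q \deg V(\frak{p}_j)\,\ell_j\bigr)\prod_{i=1}^k\deg(f_i)\prod_{i=k+1}^m\deg(\ell_i)=\deg(\frak{a})\prod_{i=1}^k\deg(f_i)$, using $\deg(\ell_i)=1$ and Definition \ref{degree-primary}(ii); here the hypothesis $\mathrm{char}\,K=0$ guarantees the separability condition of Proposition \ref{prop_bezout1}(i) is automatically satisfied. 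Chaining these inequalities gives exactly $\deg(\frak{b})\le \deg(\frak{a})\prod_{i=1}^k\deg(f_i)$.

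The main obstacle I anticipate is the bookkeeping needed to make the genericity argument rigorous: one must be sure that a single choice of $\ell_{k+1},\ldots,\ell_m$ can be made to satisfy conditions (i) and (ii) simultaneously, which requires checking that the relevant Zariski-dense constructible sets (one from Corollary \ref{secant_density}, one iterated from Proposition \ref{density:corol2}) have nonempty intersection, and that when we invoke Lemma \ref{inequality} and Proposition \ref{prop_bezout1}(i) for the extended sequence $\vec g$ we are using the correct $S(\vec g)$ and the correct "isolated primes" data — note that the $\frak{p}_j$ and $\ell_j$ in the final bound are those of the original ideal $\frak{a}$, not of $\frak{b}$, which is precisely why Proposition \ref{prop_bezout1}(i) applied to $\frak{a}$ with the length-$m$ secant sequence $\vec g$ yields the clean statement. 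A secondary subtlety is verifying that $\dim_K K[\vec X]/(\vec g)$ computed this way genuinely equals $\deg(\frak{b})$ in the sense of Definition \ref{degree-primary}(i), i.e. that $\ell_{k+1},\ldots,\ell_m$ restricted to $\frak{b}$ is a generic enough secant sequence — but this is exactly what Corollary \ref{secant_density} and the definition provide once we take the intersection of dense sets.
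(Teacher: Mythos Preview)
Your proposal is correct and follows essentially the same route as the paper: extend the given regular sequence by generic degree-one polynomials $\ell_{k+1},\ldots,\ell_m$ (the paper combines Proposition~\ref{density:corol2} with Theorem~\ref{degree_formula}, applied to $\frak{b}$, in place of your Corollary~\ref{secant_density} plus Definition~\ref{degree-primary}(i)), then apply Lemma~\ref{inequality} and Proposition~\ref{prop_bezout1}(i) to the full length-$m$ sequence with respect to $\frak{a}$. The only cosmetic difference is that the paper cites Theorem~\ref{degree_formula} directly (which already yields both ``secant'' and ``computes $\deg(\frak{b})$'') rather than Corollary~\ref{secant_density}, so your separate invocation of the latter is redundant but harmless.
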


\begin{proof}
Since $\frak{a}$ is assumed equidimensional and $f_1,\ldots,f_k$ is  a regular sequence, Krull's Principal Ideal Theorem implies that ideal $\frak{b}:=\frak{a}+(f_1,\ldots,f_k)$ is also equidimensional. Combining Proposition \ref{density:corol2} with Theorem \ref{degree_formula} we see that there exists a regular sequence $f_{k+1},\ldots,f_m$  of degree one polynomials such that $\deg(\frak{b})=\dim_K K[\vec{X}]/\frak{b}+(f_{k+1},\ldots,f_m)$ holds.

Let $S:=K[f_1,\ldots,f_k,f_{k+1},\ldots,f_m]\setminus \{0\}$ and $L:=K(f_1,\ldots,f_k,f_{k+1},\ldots,f_m)$, then Lemma  \ref{inequality} states the inequality
\[
\deg(\frak{b})=\dim_K A/(f_1,\ldots,f_k,f_{k+1},\ldots,f_m)\le \dim_L S^{-1}(A).
\]
On the other hand, taking into account that the characteristic of $K$ is zero, from Proposition \ref{prop_bezout1} $i)$ we deduce
\[
\dim_L S^{-1}(A)\le \left(\sum_{j=1}^q \deg(V({\frak p}_j)) \, \ell_j\right) \prod_{i=1}^m \deg(f_i)= \deg(\frak{a})\prod_{i=1}^k \deg(f_i).
\]
This implies the theorem.
\end{proof}

\section{A Masser-W\"ustholz type degree bound for non-homogeneous  polynomial ideals}

The constructions in this section are inspired by \cite{MW}.

Let $\frak{a}$ be an arbitrary non-zero proper ideal of the polynomial ring $R:=K[\vec{X}]$ where $K$ is an algebraically closed field of characteristic zero and $\vec{X}:=(X_1,\ldots,X_n)$ is a set of variables. Denote by $r:=\textrm{ht}(\frak{a})$ the height of the ideal $\frak{a}$.

From a primary decomposition of $\frak{a}$ we obtain a decomposition of $\frak{a}$ as follows:
\[
\frak{a}=\frak{Q}_r\cap \frak{Q}_{r+1}\cap\cdots\cap \frak{Q}_{n}\cap \frak{I},
\]
where, for each $j=r,\ldots,n$ the ideal $\frak{Q}_{j}$ is the intersection of all isolated primary components of $\frak{a}$ having height $j$, or the whole ring $R$ otherwise. The ideal $\frak{I}$ is the intersection of the embedded primary components.

For any $j=r,\ldots,n$ such that $\frak{Q}_j\ne R$, let $\frak{Q}_j=\bigcap_{i=1}^{s_j}\frak{q}_{ji}$ be its primary decomposition. Observe that $\frak{Q}_j$ is unmixed of height $j$.

\subsection{A family of suitable multiplicative sets related to the ideal $\frak{a}$}

In this section we introduce suitable simple multiplicative sets such that the respective localizations detect each equidimensional component of the ideal $\frak{a}$ (see Proposition \ref{local} below).

With the previous notations, for each pair $(k,\ell)$ such that $\frak{Q}_{k}\ne R$ and $1\le \ell\le s_{k}$, there exists a point $z_{k\ell}\in \A^n$ lying in the variety $V(\frak{q}_{k\ell})$ but outside the union of the remaining irreducible components $V(\frak{q}_{ji})$, with $j\ne k$ or $i\ne \ell$ if $j=k$, and the immerse variety $V(\frak{I})$. In purely idealistic terms, there exists a maximal ideal $\frak{m}_{k\ell}$ such that $\frak{q}_{k\ell}\subseteq \frak{m}_{k\ell}$ but  $\frak{q}_{ji}\nsubseteq \frak{m}_{k\ell}$ for all the other pairs $(j,i)$ and $\frak{I}\nsubseteq \frak{m}_{k\ell}$.

For each index $j=r,\ldots,n$ such that $\frak{a}$ has isolated components of height $j$ we introduce the multiplicative set \[ S_{j}:=R\setminus \bigcup _{i=1}^{s_j} \frak{m}_{ji}.\] If there is no isolated component of $\frak{a}$ with height $j$ we define $S_j:=R\setminus\{0\}$.

From $\frak{q}_{k\ell}\subseteq \frak{m}_{k\ell}$ we infer  $\frak{q}_{k\ell}\cap S_{k}=\emptyset$. On the other hand, for $j\ne k$, we have $\frak{q}_{k\ell}\cap S_j\ne \emptyset$. If $S_j=R\setminus\{0\}$ this is obvious. If $S_j\ne R\setminus\{0\}$ the assumption $\frak{q}_{k\ell}\cap S_j=\emptyset$ implies the inclusion $\frak{q}_{k\ell}\subset \bigcup_{i=1}^{s_j}\frak{m}_{ji}$. By \cite[Proposition 1.11]{Atiyah-Macdonald}, there exists then an index $i$ with $\frak{q}_{k\ell}\subset \frak{m}_{ji}$, in contradiction with the choice of the maximal ideals and $j\ne k$. A similar argument shows $\frak{I}\cap S_j\ne \emptyset$ for all $j=r,\ldots n$. Namely, if $\frak{I}$ is disjoint from $S_j$ then $\frak{I}$ must be included in suitable maximal ideal $\frak{m}_{ji}$, which again contradicts the choice of the maximal ideals.

With these considerations we have

\begin{proposition} \label{local}
For any index $k=r,\ldots,n$ the equality
\[ S_{k}^{-1}(\frak{a})=S_{k}^{-1}(\frak{Q}_{k})
\]
holds in the fraction ring $S_k^{-1}R$.

In particular, if $\frak{Q}_k\ne R$, the ideal $S_k^{-1}(\frak{a})$ is unmixed of height $k$ (or equivalently unmixed of dimension $n-k$).
\end{proposition}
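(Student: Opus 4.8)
The plan is to exploit two elementary facts about localization of ideals: it commutes with finite intersections, i.e. $S^{-1}(\frak{b}\cap\frak{c})=S^{-1}\frak{b}\cap S^{-1}\frak{c}$, and $S^{-1}\frak{c}=S^{-1}R$ as soon as $\frak{c}\cap S\ne\emptyset$ (an ideal containing a unit is the whole ring). Applied with $S=S_k$, these will collapse every ``wrong'' layer of the decomposition $\frak{a}=\frak{Q}_r\cap\cdots\cap\frak{Q}_n\cap\frak{I}$ and leave only $\frak{Q}_k$ behind, which is exactly the asserted equality.

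First I would write $S_k^{-1}(\frak{a})=\bigcap_{j=r}^{n}S_k^{-1}(\frak{Q}_j)\cap S_k^{-1}(\frak{I})$, the factors with $\frak{Q}_j=R$ contributing only $S_k^{-1}R$. Fix $j\ne k$ with $\frak{Q}_j\ne R$ and write $\frak{Q}_j=\bigcap_{i=1}^{s_j}\frak{q}_{ji}$. The discussion immediately preceding the proposition shows $\frak{q}_{ji}\cap S_k\ne\emptyset$ for each $i$, so $S_k^{-1}(\frak{q}_{ji})=S_k^{-1}R$ and hence $S_k^{-1}(\frak{Q}_j)=S_k^{-1}R$. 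The same discussion gives $\frak{I}\cap S_k\ne\emptyset$, whence $S_k^{-1}(\frak{I})=S_k^{-1}R$. Thus the only surviving factor is the one for $j=k$, and $S_k^{-1}(\frak{a})=S_k^{-1}(\frak{Q}_k)$. If instead $\frak{Q}_k=R$, then by definition $S_k=R\setminus\{0\}$; since $\frak{a}\ne 0$ it meets $S_k$, so both $S_k^{-1}(\frak{a})$ and $S_k^{-1}(\frak{Q}_k)$ equal the fraction field $S_k^{-1}R$ and the equality holds trivially there too.

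For the ``in particular'' clause, suppose $\frak{Q}_k\ne R$ and let $\frak{q}_{k1},\ldots,\frak{q}_{ks_k}$ be its primary components, with $\frak{q}_{k\ell}$ being $\frak{p}_{k\ell}$-primary and $\textrm{ht}(\frak{p}_{k\ell})=k$ (recall $\frak{Q}_k$ is unmixed of height $k$). Since $\frak{p}_{k\ell}\subseteq\frak{m}_{k\ell}$ and $\frak{m}_{k\ell}\cap S_k=\emptyset$, each $\frak{p}_{k\ell}$ is disjoint from $S_k$; hence $S_k^{-1}(\frak{q}_{k\ell})$ is a proper $S_k^{-1}(\frak{p}_{k\ell})$-primary ideal, and $S_k^{-1}(\frak{a})=\bigcap_{\ell}S_k^{-1}(\frak{q}_{k\ell})$ is a primary decomposition in $S_k^{-1}R$ with associated primes exactly the $S_k^{-1}(\frak{p}_{k\ell})$. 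Under the order-preserving bijection between the primes of $S_k^{-1}R$ and the primes of $R$ disjoint from $S_k$, a saturated chain realizing $\textrm{ht}(\frak{p}_{k\ell})=k$ stays in that set (all its members lie in $\frak{p}_{k\ell}$), so $\textrm{ht}(S_k^{-1}(\frak{p}_{k\ell}))=k$; using that $S_k^{-1}R$ is catenary of dimension $n$ one gets correspondingly $\dim S_k^{-1}R/S_k^{-1}(\frak{p}_{k\ell})=n-k$. Hence $S_k^{-1}(\frak{a})$ is unmixed of height $k$, equivalently of dimension $n-k$.

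The whole argument is essentially bookkeeping around the prime-avoidance facts established before the statement; the only point requiring a little care is the invariance of $\textrm{ht}(\frak{p}_{k\ell})$ under this localization, which I would settle by the chain argument just described. I do not anticipate a genuine obstacle here.
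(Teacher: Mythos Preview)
Your proof is correct and follows essentially the same approach as the paper: both use that localization commutes with finite intersections and that the discussion preceding the proposition forces $S_k^{-1}(\frak{q}_{ji})=S_k^{-1}R$ for $j\ne k$ and $S_k^{-1}(\frak{I})=S_k^{-1}R$, leaving only the $\frak{Q}_k$ factor. You supply more detail than the paper does---you treat the case $\frak{Q}_k=R$ separately and justify the ``in particular'' clause via the prime correspondence and a chain argument, whereas the paper simply asserts that the unmixed-of-height-$k$ property passes from $\frak{Q}_k$ to its localization---but the underlying argument is the same.
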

\begin{proof}
By the previous arguments we have \[S_k^{-1}(\frak{a})=S_k^{-1}(\frak{Q}_r)\cap S_k^{-1}(\frak{Q}_{r+1})\cap\cdots\cap S_k^{-1}(\frak{Q}_{n})\cap S_k^{-1}(\frak{I})=\]
\[
=\bigcap_{ji} S_k^{-1}(\frak{q}_{ji})\ \cap S_k^{-1}(\frak{I})=\bigcap_{ki} S_k^{-1}(\frak{q}_{ki})=S_k^{-1}(\frak{Q}_k).
\]
Thus, in case $\frak{Q}_k\ne R$, the ideal $S_k^{-1}(\frak{a})$ is unmixed of height $k$ because $\frak{Q}_k$ has this property.
\end{proof}

\subsection{A suitable local regular sequence contained in $\frak{a}$}
\ \medskip

Let $\vec{g}:=g_1,\ldots,g_s$ be a system of generators of $\frak{a}$ with degrees $D_1\ge D_2\ge\cdots\ge D_s$, respectively. Since $\frak{a}$ is assumed generated by $s$ many polynomials, Krull's Principal Ideal Theorem (see \cite[Corollary 11.16]{Atiyah-Macdonald}) implies that in the primary decomposition of $\frak{a}$ only unmixed components $\frak{Q}_k$ with $k\le s$ may appear.

\begin{lemma} \label{regular sequences}
Fix an index $k=r,\ldots,n$ with $\frak{Q}_k\ne R$. Then, there exist polynomials $p_1,\ldots,p_k\in \frak{a}$ such that for all $j$, $1\le j\le k$, the following conditions are satisfied:

\begin{enumerate}
\item[i)] The polynomial $p_j$ is a generic linear combination of the polynomials $g_j,\ldots,g_s$ (in particular,  $\deg(p_j)=\deg(g_j)=D_j$).
\item[ii)] $p_1,\ldots,p_j$ is a regular sequence in the localized ring $S_k^{-1}(R)$.
\item[iii)] If $\frak{a}_j:=(p_1,\ldots,p_j)$ and $\frak{a}_j^*:=S_k^{-1}(\frak{a}_j)\cap R$, the ideal $\frak{a}^*_j$ is an unmixed ideal of height $j$ in $R$.
\item[iv)] The inequality $\deg(\frak{a}^*_{j})\le \deg(\frak{a}^*_{j-1}) \, D_j$ holds.
\end{enumerate}
\end{lemma}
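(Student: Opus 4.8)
The plan is to construct the polynomials $p_1,\ldots,p_k$ inductively on $j$, maintaining all four conditions simultaneously. The guiding idea is that in the localized ring $S_k^{-1}(R)$ the ideal $S_k^{-1}(\frak{a})$ is unmixed of height $k$ by Proposition \ref{local}, so it behaves like an ideal of a local ring at the relevant primes, and one can extract a regular sequence of length $k$ from any system of generators by taking generic linear combinations. Concretely, suppose $p_1,\ldots,p_{j-1}$ have been constructed (for $j=1$ this is the empty sequence, $\frak{a}_0=(0)$, $\frak{a}_0^*=(0)$, which is unmixed of height $0$). Since $p_1,\ldots,p_{j-1}$ is a regular sequence in $S_k^{-1}(R)$ and $j-1<k=\mathrm{ht}(S_k^{-1}(\frak{a}))$, the ideal $(p_1,\ldots,p_{j-1})S_k^{-1}(R)$ is properly contained in $S_k^{-1}(\frak{a})$; hence $S_k^{-1}(\frak{a})$ is not contained in any associated prime of $(p_1,\ldots,p_{j-1})S_k^{-1}(R)$ (each such prime has height $\le j-1<k$ by the unmixedness/Cohen--Macaulay behaviour of the $p_i$, whereas $S_k^{-1}(\frak{a})$ has height $k$). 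Since $\frak{a}$ is generated by $g_j,\ldots,g_s$ together with $p_1,\ldots,p_{j-1}$ (here we use that $p_1,\ldots,p_{j-1}$ are already combinations of $g_1,\ldots,g_s$, and in fact one arranges the construction so that $g_1,\ldots,g_{j-1}$ have been "absorbed''), a prime avoidance argument over the infinite field $K$ shows that a \emph{generic} $K$-linear combination $p_j$ of $g_j,\ldots,g_s$ avoids all these finitely many associated primes. This gives conditions i) and ii).

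For condition iii): once $p_1,\ldots,p_j$ is a regular sequence in $S_k^{-1}(R)$, the ideal $\frak{a}_j S_k^{-1}(R)=(p_1,\ldots,p_j)S_k^{-1}(R)$ is unmixed of height $j$ (a regular sequence in a Cohen--Macaulay — indeed regular — ring generates an unmixed ideal of the expected height; $S_k^{-1}(R)$ is a localization of a polynomial ring, hence regular). The contraction $\frak{a}_j^*=S_k^{-1}(\frak{a}_j)\cap R$ then consists exactly of those primary components of $\frak{a}_j$ whose associated primes meet $S_k$ trivially; all of these have height $j$, so $\frak{a}_j^*$ is unmixed of height $j$ in $R$. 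One should check that $\frak{a}_j^*$ is indeed a proper ideal, which follows since its height is $j\le k\le n$.

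For condition iv), the degree inequality: this is where the results of Section 4 enter and is the main technical point. The idea is to relate $\frak{a}_j^*$ to $\frak{a}_{j-1}^*+(p_j)$. Since $p_j$ is a generic linear combination, it can be arranged (again by prime avoidance, now with respect to the finitely many associated primes of $\frak{a}_{j-1}^*$, all of height $j-1<j\le k$, none of which can contain $S_k^{-1}(\frak{a})\supseteq$ the relevant components, and none of which contains $p_j$ generically) that $p_j$ is a non-zero-divisor modulo $\frak{a}_{j-1}^*$, i.e. a regular element with respect to the equidimensional ideal $\frak{a}_{j-1}^*$ of positive dimension $n-(j-1)>0$. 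Then Theorem \ref{bezout_suc_regular} (applicable since $K$ has characteristic zero) gives
\[
\deg\bigl(\frak{a}_{j-1}^*+(p_j)\bigr)\le \deg(\frak{a}_{j-1}^*)\,\deg(p_j)=\deg(\frak{a}_{j-1}^*)\,D_j.
\]
It remains to compare $\deg(\frak{a}_j^*)$ with $\deg(\frak{a}_{j-1}^*+(p_j))$: both are unmixed of height $j$, and $\frak{a}_{j-1}^*+(p_j)\subseteq \frak{a}_j^*$ after localizing appropriately (since $p_1,\ldots,p_{j-1}\in \frak{a}_{j-1}^*$ and $p_j$ is added, while $\frak{a}_j^*$ only removes components not meeting $S_k$, which does not decrease the degree of the surviving part), so Proposition \ref{subset} yields $\deg(\frak{a}_j^*)\le \deg(\frak{a}_{j-1}^*+(p_j))\le \deg(\frak{a}_{j-1}^*)\,D_j$, as desired.

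The hardest part will be condition iv), specifically the careful bookkeeping needed to ensure that passing to the contraction $\frak{a}_j^*$ from $\frak{a}_{j-1}^*+(p_j)$ does not increase the degree — one must argue that contraction from a localization only discards whole primary components (which can only decrease the degree) and must verify that $\frak{a}_{j-1}^*$ is genuinely equidimensional of positive dimension so that Theorem \ref{bezout_suc_regular} applies; the generic choice of $p_j$ must be made to satisfy \emph{all} the avoidance conditions for ii), iii) and iv) at once, which is legitimate because it is a finite intersection of nonempty Zariski open conditions on the coefficient space.
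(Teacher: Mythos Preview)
Your proposal is correct and follows essentially the same inductive strategy as the paper: prime avoidance in the Cohen--Macaulay ring $S_k^{-1}(R)$ for i)--iii), then the inclusion $\frak{a}_{j-1}^*+(p_j)\subseteq \frak{a}_j^*$ combined with Proposition~\ref{subset} and Theorem~\ref{bezout_suc_regular} for iv). One simplification in the paper that dissolves your final worry: regularity of $p_j$ modulo $\frak{a}_{j-1}^*$ is \emph{automatic} from condition ii) (the associated primes of $\frak{a}_{j-1}^*$ are precisely the contractions of those of $S_k^{-1}(\frak{a}_{j-1})$), and the inclusion $\frak{a}_{j-1}^*+(p_j)\subseteq \frak{a}_j^*$ holds directly in $R$ since $\frak{a}_{j-1}\subseteq\frak{a}_j$ and $p_j\in\frak{a}_j$, so no extra genericity conditions or localization bookkeeping are needed.
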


\begin{proof}

The proof runs by induction on $j$, where $1\le j\le k$.

For $j=1$ let $p_1$ be a generic linear combination of the generators $g_1,\ldots,g_s$. Clearly $p_1$ satisfies the conditions $i),ii),iii)$ for $j=1$ because $p_1$ is not invertible in  $S_k^{-1}(R)$ since it belongs to the unmixed $(n-k)$-dimensional ideal $S_k^{-1}(\frak{a})$ (see Proposition \ref{local}). Remark that, even if condition $iv)$ is vacuous for $j=1$, the inequality $\deg(\frak{a}^*_{1})\le \deg(p_1)$ holds (see Propositions \ref{subset} and \ref{single}).

Suppose now that the lemma holds for $1\le j<k$ and let $p_1,\ldots,p_j$ be polynomials verifying conditions $i)-iv)$. Let  \[S_k^{-1}(\frak{a}_j)=\displaystyle{\bigcap_{i=1}^{q} \frak{h}_{i}}\] be a primary decomposition of the ideal $S_k^{-1}(\frak{a}_j)$ in the ring $S_k^{-1}R$. Remark that all the primary components in this decomposition are isolated and $(n-j)$-dimensional because $S_k^{-1}R$ is a Cohen-Macaulay ring and $S_k^{-1}(\frak{a}_j)$ is generated by the regular sequence $p_1,\ldots,p_j$.

For each $i=1,\ldots,q$ consider the linear subspace \[T_{i}:=\{ \mu\in K^{s-j-1}\ |\ \sum_{t\ge j+1} \mu_tg_t\in \sqrt{\frak{h}_{i}}\},\] where $\sqrt{\frak{h}_i}$ denotes the radical of $\frak{h}_i$. If for some $i$, $T_{i}$ is the whole space $K^{s-j-1}$, then $g_{j+1},\ldots,g_s\in\sqrt{\frak{h}_{i}}$. Then, since $p_j\in \frak{h}_i$ and $p_j$ is a generic linear combination of $g_j,g_{j+1},\ldots,g_s$ we infer that $g_j$ also belongs to $\sqrt{\frak{h}_{i}}$ and by repeating this argument we conclude that $S_k^{-1}(\frak{a})\subset \sqrt{\frak{h}_{i}}$. But then, by Proposition \ref{local}, we have the inequality of heights $k\le j$, which contradicts the choice of $j$.

Therefore, any $T_{i}$ is a proper linear subspace of $K^{s-j-1}$, and in particular, a generic vector $\mu\in K^{s-j-1}$ verifies $\mu\notin \bigcup_i T_i$ and so, the associated polynomial $p_{j+1}:=\sum_{t\ge j+1} \mu_tg_t$ is not a zero divisor modulo the ideal $S_k^{-1}(\frak{a}_j)$. Moreover, $p_{j+1}$ is not invertible modulo $S_k^{-1}(\frak{a}_j)$ because of the inclusion $S_k^{-1}((p_{j+1})+ \frak{a}_j)\subseteq S_k^{-1}(\frak{a})$ and Proposition \ref{local}. Hence, $p_1,\ldots,p_{j+1}$ is a regular sequence in $S_k^{-1}R$ and conditions $i)$ and $ii)$ are satisfied for $j+1$.

Condition $iii)$ is a consequence of $ii)$ by Macaulay's Theorem applied to the Cohen-Macaulay ring $S_k^{-1}R$ and the well-known fact that the contraction to $R$ of a primary ideal in $S_k^{-1}R$ remains primary of same dimension.

We finish the proof showing condition $iv)$: Since $p_1,\ldots,p_{j}$ is a regular sequence in $S_k^{-1}R$, the ideals $S_k^{-1}(\frak{a}_j)$ and $\frak{a}^*_j$ are both equidimensional of dimension $n-j$. On the other hand,  as $p_{j+1}$ is regular with respect to $S_k^{-1}(\frak{a}_j)$, then it is also regular with respect to $S_k^{-1}(\frak{a}_j)\cap R=\frak{a}^*_j$. In particular, $\frak{a}^*_j+(p_{j+1})$ is equidimensional of dimension $n-j-1$.

Since $\frak{a}^*_{j}\subseteq \frak{a}^*_{j+1}$ and $p_{j+1}\in \frak{a}^*_{j+1}$ holds, we have the inclusion $\frak{a}^*_{j}+(p_{j+1})\subseteq \frak{a}^*_{j+1}$ and both ideals are equidimensional of dimension $n-j-1$. Thus, Proposition \ref{subset} implies \[\deg(\frak{a}^*_{j+1})\le \deg(\frak{a}^*_{j}+(p_{j+1})).\]

Finally, by Theorem \ref{bezout_suc_regular} applied to the ideal $\frak{a}^*_{j}$ and the polynomial $p_{j+1}$, we have the inequalities \[\deg(\frak{a}^*_{j+1})\le \deg(\frak{a}^*_{j}+(p_{j+1}))\le \deg(\frak{a}^*_{j})\deg(p_{j+1})= \deg(\frak{a}^*_{j})\, D_{j+1}\]

and the lemma is proved.
\end{proof}

\subsection{B\'ezout Inequality in Masser-W\"ustholz style}

Lemma \ref{regular sequences} implies the following B\'ezout Inequality which appears in \cite{MW} in the projective case with the usual notion of degree for homogeneous ideals:

\begin{theorem} \label{weak}
Let $K$ be an algebraically closed field of characteristic zero, $\vec{X}:=(X_1,\ldots,X_n)$ variables over $K$ and $\frak{a}\subset R:=K[\vec{X}]$ a non-zero and proper polynomial ideal generated by polynomials $g_1,\ldots,g_s$ of total degrees  $D_1\ge\cdots\ge D_s$, respectively. Fix an index $k$, $1\le k\le n$, such that in the primary decomposition of $\frak{a}$ there exist isolated components of height $k$ and denote by $\frak{Q}_k$ the intersection of this components. Then:
\begin{enumerate}
\item[i)] The inequality $\deg(\frak{Q}_k)\le D_1\ldots D_k$ holds.
\item[ii)] The inequality
$\deg(\frak{a})\le \sum_{k\in \mathcal{C}} D_1\ldots D_k$,
holds, where $\mathcal{C}$ is the set of those $k$ such that the ideal $\frak{a}$ has isolated primary components of height $k$.
\end{enumerate}
\end{theorem}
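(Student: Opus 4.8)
The plan is to obtain both inequalities as bookkeeping on top of Lemma~\ref{regular sequences}, where the substantive work has already been carried out.

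\textbf{Part $i)$.} Fix an index $k$ with $\frak{Q}_k\neq R$ and let $p_1,\ldots,p_k\in\frak{a}$ be the polynomials furnished by Lemma~\ref{regular sequences}, together with the associated ideals $\frak{a}_j=(p_1,\ldots,p_j)$ and $\frak{a}_j^*=S_k^{-1}(\frak{a}_j)\cap R$. First I would iterate condition $iv)$ of that lemma: from $\deg(\frak{a}_1^*)\le\deg(p_1)=D_1$ and $\deg(\frak{a}_j^*)\le\deg(\frak{a}_{j-1}^*)\,D_j$ for $2\le j\le k$ one gets $\deg(\frak{a}_k^*)\le D_1\cdots D_k$. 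It then remains to compare $\frak{a}_k^*$ with $\frak{Q}_k$. Since $\frak{a}_k=(p_1,\ldots,p_k)\subseteq\frak{a}$, Proposition~\ref{local} gives $S_k^{-1}(\frak{a}_k)\subseteq S_k^{-1}(\frak{a})=S_k^{-1}(\frak{Q}_k)$, hence $\frak{a}_k^*\subseteq S_k^{-1}(\frak{Q}_k)\cap R$; and $S_k^{-1}(\frak{Q}_k)\cap R=\frak{Q}_k$, because each primary component $\frak{q}_{ki}$ of the unmixed ideal $\frak{Q}_k$ satisfies $\frak{q}_{ki}\cap S_k=\emptyset$ (as $\frak{q}_{ki}\subseteq\frak{m}_{ki}$ and $\frak{m}_{ki}\cap S_k=\emptyset$), so contracting the extension of each $\frak{q}_{ki}$ returns $\frak{q}_{ki}$ itself. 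By part $iii)$ of Lemma~\ref{regular sequences} the ideal $\frak{a}_k^*$ is unmixed of height $k$, i.e.\ equidimensional of dimension $n-k$, exactly as $\frak{Q}_k$; hence Proposition~\ref{subset} applies to the inclusion $\frak{a}_k^*\subseteq\frak{Q}_k$ and yields $\deg(\frak{Q}_k)\le\deg(\frak{a}_k^*)\le D_1\cdots D_k$.

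\textbf{Part $ii)$.} Here I would simply unwind Definition~\ref{degree-any}. The isolated primary components of $\frak{a}$ are precisely the $\frak{q}_{ki}$ with $k\in\mathcal{C}$ and $1\le i\le s_k$, and $\frak{Q}_k=\bigcap_{i=1}^{s_k}\frak{q}_{ki}$. Comparing Definitions~\ref{degree-primary} and~\ref{degree-any}, the isolated primes of $\frak{Q}_k$ are the $\frak{p}_{ki}$, and since these primes of common height $k$ are pairwise incomparable, the length of $(R/\frak{Q}_k)_{\frak{p}_{ki}}$ equals that of $(R/\frak{q}_{ki})_{\frak{p}_{ki}}$; therefore $\deg(\frak{Q}_k)=\sum_{i=1}^{s_k}\deg(\frak{q}_{ki})$. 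Consequently $\deg(\frak{a})=\sum_{k\in\mathcal{C}}\sum_{i=1}^{s_k}\deg(\frak{q}_{ki})=\sum_{k\in\mathcal{C}}\deg(\frak{Q}_k)$, and applying part $i)$ termwise gives $\deg(\frak{a})\le\sum_{k\in\mathcal{C}}D_1\cdots D_k$.

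The only genuinely delicate point is the chain $\frak{a}_k^*\subseteq S_k^{-1}(\frak{Q}_k)\cap R=\frak{Q}_k$ together with the verification that $\frak{a}_k^*$ and $\frak{Q}_k$ are equidimensional of the same dimension $n-k$, which is exactly what makes Proposition~\ref{subset} legitimately applicable; everything else is accounting with the definitions and with the inequality already packaged inside Lemma~\ref{regular sequences} (itself resting on the characteristic-zero Theorem~\ref{bezout_suc_regular}).
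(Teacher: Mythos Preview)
Your proof is correct and follows essentially the same route as the paper: invoke Lemma~\ref{regular sequences}, use Proposition~\ref{local} to obtain $\frak{a}_k^*\subseteq\frak{Q}_k$, apply Proposition~\ref{subset} to the two equidimensional ideals, and iterate condition~$iv)$ for the bound $D_1\cdots D_k$. Your write-up is in fact slightly more explicit than the paper's in two places---you justify $S_k^{-1}(\frak{Q}_k)\cap R=\frak{Q}_k$ via the disjointness $\frak{q}_{ki}\cap S_k=\emptyset$, and for part~$ii)$ you spell out why $\deg(\frak{a})=\sum_{k\in\mathcal{C}}\deg(\frak{Q}_k)$---but these are elaborations of steps the paper treats as immediate, not a different argument.
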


\begin{proof}
Following the previous lemma, consider the polynomials $p_1,\ldots,p_k$ defining a regular sequence in the localized ring $S_k^{-1}(R)$. Since $ \frak{a}_k:=(p_1,\ldots,p_k)\subseteq \frak{a}$, we have $S_k^{-1}(\frak{a}_k)\subseteq S_k^{-1}(\frak{a})=S^{-1}_k(\frak{Q}_k)$ (Proposition \ref{local}). Hence, $\frak{a}^*_k:=S_k^{-1}(\frak{a}_k)\cap R\subseteq \frak{Q}_k$, and both ideals are unmixed of height $k$ (and in particular, equidimensional of dimension $n-k$). From Proposition \ref{subset}  we conclude $\deg(\frak{Q}_k)\le \deg (\frak{a}^*_k)$.

By iteration of Condition $iv)$ in Lemma \ref{regular sequences} we obtain $\deg (\frak{a}^*_k)\le D_1\ldots D_k$ which implies inequality $i)$.

The assertion $ii)$ is immediate from $i)$ and the definition of degree of arbitrary ideals (Definition \ref{degree-any}).
\end{proof}

\section{Computing the degree of an equidimensional polynomial ideal} \label{computing}

Let $\frak{a}$ be an equidimensional ideal of the polynomial ring $\Q[\vec{X}]$ in $n$ variables $\vec{X}:=(X_1,\ldots,X_n)$. Assume that the dimension $m$ and a system of generators $g_1,\ldots,g_s\in \Z[\vec{X}]$ of the ideal $\frak{a}$ are known. Let $d$ and $\sigma$ be upper bounds for the degrees of $g_1,\ldots,g_s$ and the bit-sizes of their coefficients.

The goal of this section is to discuss the complexity character of the problem of computing $\deg(\frak{a})$.

Of course, one could compute (uniformly and deterministically) a primary decomposition of the ideal $\frak{a}$ (see \cite{GTZ}) and determine $\deg(\frak{a})$ by means of Definition \ref{degree-primary}  $ii)$. This would involve a computational cost which is doubly exponential in $n$.\\

We present here a probabilistic approach which is more efficient and discuss then whether its complexity can be improved.\\

Let $T_{ij}$, $1\le i\le m$, $0\le j\le n$ be new indeterminates, $\vec{T}=(T_{ij})_{{1\le i\le m}\atop{0\le j\le n}}$ and let for $1\le i\le m$ $\displaystyle{f_i:=\sum_{j=1}^n T_{ij}X_j\, +\, T_{i0}}$. Let $\vec{f}:=f_1,\ldots,f_m$ and fix a monomial order for $\vec{X}$.

Let $\frak{b}$ the ideal generated by $\frak{a}$. From the proof of Lemma \ref{density_dimension} and Theorem \ref{degree_formula} we deduce
\[ \deg(\frak{a})=\dim_{\Q(\vec{T})} \Q(\vec{T})[\vec{X}]/\frak{b}.
\]

Since $\frak{b}$ is an ideal of dimension zero in $\Q(\vec{T})[\vec{X}]$ we have just to compute a Gr\"obner basis of $\frak{b}$ from the generators $g_1,\ldots,g_s$ and $f_1,\ldots,f_m$.

Following \cite[Theorem 3.3]{DFGS} this can be done using $(sd^{n^2})^{O(1)}$ arithmetic operations in $\Q(\vec{T})$.

Applying \cite[Theorem 4.4]{HS} we obtain a non-uniform deterministic or uniform probabilistic algorithm which computes $\deg(\frak{a})$ by means of $(sd^{n^2})^{O(1)}$ arithmetic operations in $\Q$.

The non-uniform deterministic version of the algorithm is based on a hitting sequence of integers having bit-size  $(sd^{n^2})^{O(1)}$. This sequence has to be chosen probabilistically in the uniform complexity model. The whole algorithm requires therefore
$(\sigma sd^{n^2})^{O(1)}$ bit operations.

Putting everything together we obtain the following complexity statement.

\begin{theorem} \label{total_complexity}
There exists a uniform probabilistic algorithm implementable on a Turing Machine with advice which computes $\deg(\frak{a})$ in time $(\sigma sd^{n^2})^{O(1)}$
\end{theorem}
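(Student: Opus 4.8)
The statement is essentially a bookkeeping assembly of the complexity facts recalled in the preceding paragraphs, so the plan is to make explicit the chain of reductions that turns the computation of $\deg(\frak{a})$ into a Gr\"obner basis computation over $\Q(\vec{T})$ and then into an arithmetic-plus-bit-complexity bound over $\Q$. First I would fix the generic linear forms $f_i=\sum_{j=1}^n T_{ij}X_j+T_{i0}$, $1\le i\le m$, with fresh indeterminates $\vec{T}$, and invoke the identity
\[
\deg(\frak{a})=\dim_{\Q(\vec{T})}\Q(\vec{T})[\vec{X}]/(g_1,\ldots,g_s,f_1,\ldots,f_m),
\]
which follows by combining Lemma \ref{density_dimension} with Theorem \ref{degree_formula} (the Zariski-open set of good specializations is nonempty, hence the generic fiber over $\Q(\vec{T})$ computes the degree). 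Since the right-hand ideal $\frak{b}$ is zero-dimensional in $\Q(\vec{T})[\vec{X}]$, its $\Q(\vec{T})$-dimension equals the number of standard monomials with respect to the fixed monomial order, so it suffices to compute a Gr\"obner basis of $\frak{b}$ and count the monomials below the resulting stair.

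Next I would quote the effective Gr\"obner basis bound of \cite[Theorem 3.3]{DFGS}: the Gr\"obner basis of $\frak{b}$, which is generated by $s+m\le s+n$ polynomials of degree at most $d$ in $n$ variables over the field $\Q(\vec{T})$, can be computed using $(sd^{n^2})^{O(1)}$ arithmetic operations in $\Q(\vec{T})$. (Here $m\le n$ and the number of new generators is absorbed into the $O(1)$ exponent.) The arithmetic operations in $\Q(\vec{T})$ are operations on multivariate rational functions; to descend to $\Q$ I would apply \cite[Theorem 4.4]{HS}, which converts such a straight-line/Gr\"obner computation over a rational function field into a non-uniform deterministic, or equivalently a uniform probabilistic, computation over $\Q$ with the same $(sd^{n^2})^{O(1)}$ count of arithmetic operations in $\Q$, at the cost of a precomputed hitting sequence of integers of bit-size $(sd^{n^2})^{O(1)}$. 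Choosing this hitting sequence is exactly the source of non-uniformity (the ``advice'') or, in the uniform model, of the randomization: a random integer point of that bit-size avoids the relevant discriminant locus with high probability.

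Finally I would account for the bit complexity. The coefficients of the $g_i$ have bit-size at most $\sigma$; during the Gr\"obner computation over $\Q$ specialized at the hitting point, intermediate integers have bit-size polynomially bounded in $\sigma$ and in the hitting-sequence bit-size $(sd^{n^2})^{O(1)}$, and there are $(sd^{n^2})^{O(1)}$ of them, so each arithmetic operation in $\Q$ costs $(\sigma sd^{n^2})^{O(1)}$ bit operations and the whole computation, including the generation of the hitting sequence, runs in $(\sigma sd^{n^2})^{O(1)}$ bit operations on a Turing machine with advice. Counting the standard monomials below the computed stair is then a polynomial-time postprocessing step. Collecting these estimates yields the asserted time bound. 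The only genuinely delicate point — and the place where I would be most careful — is the passage from arithmetic complexity over $\Q(\vec{T})$ to bit complexity over $\Q$: one must check that \cite[Theorem 4.4]{HS} applies to the particular (branching, division-containing) Gr\"obner algorithm used, that the hitting set it produces indeed lies outside the locus where specialization changes the stair (this is precisely the nonempty Zariski-open set of Lemma \ref{density_dimension}, now made quantitative), and that the coefficient growth during the specialized computation is controlled by $\sigma$ together with the hitting-sequence bit-size rather than blowing up further. Everything else is routine substitution into the cited bounds.
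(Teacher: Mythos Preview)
Your proposal is correct and follows essentially the same route as the paper: reduce $\deg(\frak{a})$ to $\dim_{\Q(\vec{T})}\Q(\vec{T})[\vec{X}]/\frak{b}$ via Lemma \ref{density_dimension} and Theorem \ref{degree_formula}, invoke \cite[Theorem 3.3]{DFGS} for the Gr\"obner basis over $\Q(\vec{T})$, then descend to $\Q$ using \cite[Theorem 4.4]{HS} with a hitting sequence of bit-size $(sd^{n^2})^{O(1)}$, and finally multiply through by $\sigma$ for the bit cost. Your added caveats about the applicability of \cite{HS} to the branching Gr\"obner computation and about coefficient growth are well placed, but the paper treats these as absorbed into the cited black boxes.
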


This result raises two questions. What is the \emph{uniform deterministic} complexity of computing $\deg(\frak{a})$? This question seems to be out of reach with the actual techniques.

The other question asks whether $\deg(\frak{a})$ can be computed probabilistically using $(\sigma sd^{n})^{O(1)}$ bit operations.

This question can be answered positively if it is possible to guess probabilistically generic degree one polynomials $f_1,\ldots,f_m$ of $\Z[\vec{X}]$ of bit size $(\sigma sd^{n})^{O(1)}$.

In this case the probabilistic algorithm of Lakshman \cite{lakshman}, techniques of \cite{GH}, a suitable arithmetic B\'ezout Inequality (see e.g. \cite[Th\'eor\`eme 2]{BGS} or \cite[\S 1.2.4]{KPS}) and efficient factorization of univariate polynomials over $\Z$ (see for instance \cite[Corollary 16.25]{gathen}) can be combined to obtain the desired complexity result. We do not go into the (lengthy) details of this approach.

\section*{Conclusion}

We introduced a suitable notion of degree for non-homogeneous polynomial ideals and proved extrinsic B\'ezout Inequalities for this notion. We argued that an intrinsic B\'ezout Inequality for the sum of two ideals is unfeasable. We exhibit a probabilistic algorithm of single exponential complexity which computes the degree of an equidimensional ideal.

\end{document}